\algnewcommand\algorithmicforeach{\textbf{for each}}
\algrenewcommand\algorithmicrequire{\textbf{Input:}}
\algrenewcommand\algorithmicensure{\textbf{Output:}}
\DeclarePairedDelimiter\ceil{\lceil}{\rceil}
\DeclarePairedDelimiter\floor{\lfloor}{\rfloor}
\theoremstyle{plain}
\newtheorem{theorem}{Theorem}[section]
\newtheorem{lemma}[theorem]{Lemma}
\newtheorem{conjecture}[theorem]{Conjecture}
\newtheorem{corollary}[theorem]{Corollary}
\theoremstyle{definition}
\newtheorem{observation}[theorem]{Observation}
\theoremstyle{remark}
\journal{Computers and Operations Research}
\begin{document}

\begin{frontmatter}

\title{A $k$-swap Local Search for Makespan Scheduling\tnoteref{NWOFunding}}

\author[inst1]{Lars Rohwedder}
\ead{rohwedder@sdu.dk}
\author[inst2]{Ashkan Safari\corref{cor1}}
\ead{a.safari@maastrichtuniversity.nl}
\author[inst2]{Tjark Vredeveld}
\ead{t.vredeveld@maastrichtuniversity.nl}
\affiliation[inst1]{organization={Department of Mathematics and Computer Science, University of Southern Denmark},
            addressline={Moseskovvej}, 
            postcode={5230},
            city={Odense}, 
            country={Denmark}}
\affiliation[inst2]{organization={Department of Quantitative Economics, School of Business and Economics, Maastricht University},
            addressline={Tongersestraat 53}, 
            postcode={6211 LM},
            city={Maastricht}, 
            country={The Netherlands}}
\cortext[cor1]{Corresponding author}
\tnotetext[NWOFunding]{This research is supported by NWO grant OCENW.KLEIN.176. Moreover, this work was carried out while Lars Rohwedder was affiliated with Maastricht University.}

\begin{abstract}
Local search is a widely used technique for tackling challenging optimization problems, offering significant advantages in terms of computational efficiency and exhibiting strong empirical behavior across a wide range of problem domains.
In this paper, we address the problem of scheduling a set of jobs on identical parallel machines with the objective of \emph{makespan minimization}. For this problem, we consider a local search neighborhood, called \emph{$k$-swap}, which is a generalized version of the widely-used \emph{swap} and \emph{jump} neighborhoods.
The $k$-swap neighborhood is obtained by swapping at most $k$ jobs between two machines.
First, we propose an algorithm for finding an improving neighbor in the $k$-swap neighborhood which is faster than the naive approach, and prove an almost matching lower bound on any such an algorithm.
Then, we analyze the number of local search steps required to converge to a local optimum with respect to the $k$-swap neighborhood. 
For $k \geq 3$, we provide an exponential lower bound regardless of the number of machines, and for $k = 2$ (similar to the swap neighborhood), we provide a polynomial upper bound for the case of having two machines.
Finally, we conduct computational experiments on various families of instances.
\end{abstract}

\begin{keyword}
Local Search \sep Scheduling \sep Makespan Minimization \sep $k$-swap
\end{keyword}

\end{frontmatter}

\section{Introduction}
Local search methods are some of the most widely used heuristics for approaching computationally difficult optimization problems~\cite{aarts2003local, michiels2007theoretical}. Their appeal comes not only from their simplicity but also from their good empirical behavior as shown, for instance, by the \emph{2-opt} neighborhood~\cite{aarts2003local, croes1958method, flood1956traveling} for the traveling salesman problem (TSP).
The local search procedure starts with an initial solution and iteratively moves from a feasible solution to a \emph{neighboring} one until specific stopping criteria are satisfied. 
The performance of local search is significantly impacted by the selection of an appropriate neighborhood function. These functions determine the range of solutions that the local search procedure can explore in a single iteration.
The most straightforward type of local search, known as \emph{iterative improvement}, continuously selects a better solution within the neighborhood of the current one. The iterative improvement process stops when no further improvements can be made within the neighborhood, signifying that the procedure has reached a \emph{local optimum}.

In this paper, we consider one of the most fundamental scheduling problems, in which we are given a set of $n$ jobs, each of which needs to be processed on one of $m$ identical machines without preemption. All the jobs and machines are available from time $0$, and a machine can process at most one job at a time. 
The objective that we consider is \emph{makespan minimization}: we want the last job to be completed as early as possible. 
This problem, denoted by $P{\parallel}C_{\max}$~\cite{graham1979optimization}, is known to be strongly NP-hard~\cite{johnson1979computers}. In consequence, numerous studies in the literature have proposed approximation algorithms for this problem. As approximation algorithms are often considered impractical, some studies have approached this problem by local search, such as running time analysis of the iterative improvement procedure with respect to the \emph{jump} neighborhood, which was done by Brucker et al.~\cite{brucker1996improving, brucker1997improving}.
Despite the simplicity of this problem at first glance, Schuurman and Vredeveld~\cite{schuurman2007performance} pose an open question on the running time analysis of the iterative improvement procedure for the simplest generalization of the jump neighborhood to \emph{swap}. 
Accordingly, the limited theoretical study of local search algorithms for this problem motivates the study presented in this paper.

\paragraph{Related work} One of the simplest neighborhoods for addressing the problem $P{\parallel}C_{\max}$ is the so-called jump neighborhood, also known as the \emph{move} neighborhood, in which one job changes its machine allocation. Finn and Horowitz~\cite{finn1979linear} have proposed a simple improvement heuristic, called \textsc{0/1-Interchange}, which is able to find a local optimum with respect to the jump neighborhood. Brucker et al.~\cite{brucker1996improving, brucker1997improving} showed that the \textsc{0/1-Interchange} algorithm terminates in $O(n^2)$ steps, regardless of the job selection rule; the tightness of this bound was shown by Hurkens and Vredeveld~\cite{hurkens2003local}. 
The upper bound of $O(n^2)$ was first obtained for schedules with two machines and was extended later to schedules with an arbitrary number of machines.
Finn and Horowitz~\cite{finn1979linear} gave a guarantee of $2 - \frac{2}{m+1}$ on the quality of the local optimum with respect to the jump neighborhood.
Another well-known folklore neighborhood is the swap neighborhood in which two jobs interchange their machine allocations.
If all jobs are scheduled on the same machine, then no swap neighbor exists; therefore, in the literature~\cite{schuurman2007performance}, the swap neighborhood is defined as one that consists of all possible jumps and swaps.
The guarantee of $2 - \frac{2}{m+1}$ also holds on the quality of the local optimum with respect to the swap neighborhood; the tightness of this bound was shown by Schuurman and Vredeveld~\cite{schuurman2007performance}.
Other neighborhoods, such as \emph{push}~\cite{schuurman2007performance}, \emph{multi-exchange}~\cite{ahuja2001multi, frangioni2004multi} and \emph{split}~\cite{brueggemann2011exponential} have also been proposed in the literature to solve this problem.
In addition to local search, various approximation algorithms have been introduced to address scheduling problems. Among them, some of the first approximation algorithms~\cite{graham1966bounds, graham1969bounds} and polynomial time approximation schemes (PTAS)~\cite{hochbaum1987using} have been developed in the context of scheduling problems.
It has been shown that there exists a fully polynomial time approximation scheme (FPTAS) for this problem~\cite{sahni1976algorithms, woeginger2000does} when the number of machines is a constant $m$. 
When the number of machines is part of the input, a PTAS has been provided for this problem~\cite{hochbaum1987using}, and the best-known running time of such a scheme is proposed by Berndt et al.~\cite{berndt2022load}.
PTAS's are often considered impractical due to their tendency to yield excessively high running time bounds (albeit polynomial) even for moderate values of $\varepsilon$ (see~\cite{marx2008parameterized}). In this paper, we focus on a local search study of the problem $P||C_{\max}$.
Local search methods are studied extensively for various other problems as well~\cite{aarts2003local}. In particular, the \emph{$k$-opt} heuristic has been proposed for TSP, and has been the subject of theoretical study. For example, de Berg et al.~\cite{buchin55fine} have proposed an algorithm that finds the best \emph{$k$-opt move} faster than the naive approach.

The scheduling problem studied in this paper has applications in various domains. For instance, in manufacturing systems, it is used to assign tasks to machines in a way that minimizes the makespan, ensuring efficient production scheduling~\cite{pinedo2012scheduling}. In cloud computing, the problem arises in optimizing the allocation of tasks to virtual machines to balance the load and minimize the execution time~\cite{calheiros2011cloudsim, rosa2024empowering}. In distributed systems, it plays a key role in load balancing to maximize resource utilization and minimize processing delays~\cite{hanamakkanavar2015load, wang2023distributed}. In satellite scheduling, it helps efficiently allocate tasks such as communication or data collection across satellites to enhance mission performance and reduce idle time~\cite{wang2024adaptive, wu2022ensemble, wu2024improved}. 

Since the problem of minimizing the makespan on identical parallel machines is strongly NP-hard~\cite{johnson1979computers}, it is unlikely that there exists an efficient algorithm that solves the problem optimally~\cite{dell2008heuristic}; several exact algorithms for this problem (see e.g.~\cite{mokotoff2004exact}) are often based on solving a (mixed) integer program and require a lot of computational power.
On the other hand, while numerous studies in the literature have explored approximation algorithms for addressing this problem~\cite{hochbaum1987using, sahni1976algorithms, woeginger2000does}, these approaches often suffer from high computational complexity, making them less practical for large-scale real-world applications~\cite{marx2008parameterized}. 
These limitations motivated us to study the problem using local search, which not only exhibits good empirical performance but is also easy to understand and implement, making it useful for practical applications.

\begin{figure}[t]
\centering
\includegraphics[width=0.6\textwidth]{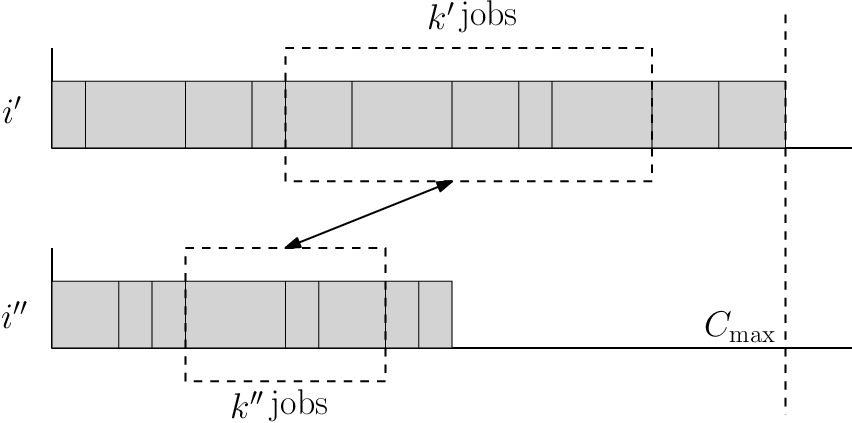}
\caption{$k$-swap operator.}
\label{fig:k-swapOerator}
\end{figure}

\paragraph{Formal model} As the order in which the jobs, assigned to the same machine, are processed, does not influence the makespan, we represent a schedule by $\sigma = (M_1, M_2, \dotsc , M_m)$, where $M_i$ represents the set of jobs scheduled on machine $i$ (for $i \in \{1, 2, \dotsc , m\}$).
Let $L_i = \sum_{j \in M_i} p_j$ denote the load of machine $i$, where $p_j$ is the time required to process job $j$. In this setting, the makespan of a schedule, commonly denoted by $C_{\max}$, is equal to the maximum load, i.e., $L_{\max} = \max_{i} L_i$, which we want to minimize (we use the notations $C_{\max}$ and $L_{\max}$ interchangeably throughout the paper). A \emph{critical} machine is a machine whose load is equal to the makespan, and a \emph{non-critical} machine is a machine with a load less than $L_{\max}$. 
Let $\mathcal{M}_{\max} = \{ i: L_i = L_{max} \}$ denote the set of critical machines.
Moreover, we define $\Delta = L_{\max} - L_{\min}$, as the difference between the maximum and minimum machine load, where $L_{\min} = \min_{i} L_i$.

In this paper, we consider a generalization of the jump and swap neighborhoods, which is called the \emph{$k$-swap} neighborhood.
To obtain a $k$-swap neighbor, we select $k'$ jobs from a machine $i'$ and $k''$ jobs from a machine $i''$, such that $k' + k'' \leq k$, and then we interchange the machine allocations of these jobs~(see Fig.~\ref{fig:k-swapOerator}). 
We say we are in a \emph{k-swap optimal solution} if we cannot decrease the makespan or the number of critical machines, while keeping the makespan equal, by applying a $k$-swap operator. For $k = 1$, the $k$-swap neighborhood is the same as the jump neighborhood, and for $k = 2$, the $k$-swap neighborhood contains all possible jumps and swaps.

\paragraph{Our contribution}
A naive implementation of the $k$-swap takes $O(k \cdot n^k)$ steps to find an improving neighbor or to determine that no improving neighbor exists.
In this paper, we propose a randomized algorithm for finding an improving solution in the $k$-swap neighborhood in $O(n^{\ceil{\frac{k}{2}} + O(1)})$ steps, if one exists, with the probability of success of at least $\dfrac{\binom{k}{\ceil{\frac{k}{2}}}}{2^k}$. By repeating the algorithm $\Theta(\sqrt{k})$ times, we can increase the probability of success to at least $1 - \frac{1}{e}$.

\begin{restatable}{theorem}{randomizedtheorem}\label{thrm:complexityOfTheAlg}
    We are able to find a better solution in the k-swap neighborhood with the probability of success of at least $1 - \frac{1}{e}$ in $O(m^2 \cdot k^{3/2} \cdot n^{\ceil{\frac{k}{2}}} \cdot \log n)$ by running Algorithm~\ref{Alg:k-Swap}, for every possibility of swapping exactly $\mathcal{K} = 1, \dotsc , k$ jobs, $\Theta(\sqrt{k})$ times.
\end{restatable}

the \noindent Furthermore, we show how to derandomize the above-mentioned algorithm, resulting in an algorithm that finds an improving neighbor, if one exists.

\begin{restatable}{theorem}{derandomizedtheorem}\label{cor:deRandomizedTimeComp}
    The derandomized neighborhood search terminates in $O(m^2 \cdot k^{9} \cdot n^{\ceil{\frac{k}{2}}} \cdot \log n \cdot \log k)$ steps.
\end{restatable}

\noindent To establish a lower bound on the running time of finding an improving neighbor in the $k$-swap neighborhood, if one exists, we make use of the $k$-sum conjecture~\cite{abboud2013exact}. The conjecture states that for every $k \geq 2$ and $\varepsilon > 0$, the $k$-sum problem, which asks whether there exist $k$ distinct numbers in a given set of $q$ numbers that sum to zero, cannot be solved in $O(q^{\lceil k/2 \rceil - \varepsilon})$ (randomized) time.
The $k$-sum problem can be solved in time $O(q^{\lceil k/2 \rceil})$, and it has been shown that the 3-sum problem is solvable in time $O(\frac{q^2}{\log^2 q})$~\cite{baran2008subquadratic}.
This conjecture has been used to provide tight lower bounds for many problems in computational geometry~\cite{barequet2001polygon, barrera1996finding, gajentaan1995class} as well as some discrete problems~\cite{abboud2014popular, jafargholi2016mathrm, patrascu2010towards, vassilevska2009finding}.

\begin{restatable}{theorem}{ksumlowerbound}\label{thrm:kSumLowerBound}
    For any $k \geq 2$ and $\varepsilon > 0$, there does not exist a randomized algorithm that finds an improving neighbor with high probability, if one exists, in time $O(n^{\lceil k/2 \rceil - \varepsilon})$, unless the $k$-sum conjecture is false.
\end{restatable}

\noindent We complement these results by analyzing the number of local search steps required to converge to a local optimum with respect to the $k$-swap neighborhood.
For $k=2$, an upper bound of $O(n^4)$ is provided for the case of having only two machines.

\begin{restatable}{theorem}{twoswapupper}\label{theorem:2-swapOptSolUpperBound}
    The total number of improving iterations in finding a 2-swap optimal solution is upper-bounded by $O(n^4)$ for the case of having $m=2$ machines.
\end{restatable}

\noindent However, for $k \geq 3$, we establish an exponential lower bound regardless of the number of machines.
Establishing such a lower bound is based on a sequence of 3-swaps which takes exponential time in the number of jobs. 

\begin{restatable}{theorem}{threeswaplower}\label{theorem:lowerbound}
    There exists an instance and an initial schedule as well as a sequence of 3-swaps such that a 
    $k$-swap optimal solution, for $k \geq 3$, is reached after $2^{\Omega(n)}$ iterations.
\end{restatable}

\noindent Finally, we have conducted computational experiments on various families of instances, and based on these experiments, we conjecture if the exponential lower bound for $k \geq 3$ represents an overly pessimistic scenario and that such a situation may be rare in practice.

\section{Searching the Neighborhood\label{sec:SearchNeighborhood}}
In this section, we present an algorithm for finding an improving solution in the $k$-swap neighborhood, which runs faster than the naive approach.
In a naive approach, we consider swapping all different combinations of at most $k$ jobs with each other. As we have $n$ jobs in our schedule, we need to search over all $O(n^k)$ different combinations in the worst case to find an improvement or to determine that no improving neighbor exists. In our algorithm, we consider all combinations of at most $\frac{k}{2}$ jobs and then, combine two such combinations to one $k$-swap.
This principle is known as the \emph{meet-in-the-middle} approach~\cite{horowitz1974computing}. The proposed algorithm is a randomized \emph{Monte Carlo} algorithm with \emph{one-sided error}~\cite{motwani1995randomized}. More precisely, when this algorithm outputs an improving solution in the $k$-swap neighborhood, this solution is always correct. Conversely, if the algorithm does not find an improvement, there is a nonzero probability that an improvement still exists.

Before we describe the algorithm, recall that a schedule is represented by $\sigma = (M_1, \dotsc , M_m)$, where $M_i$ denotes the set of jobs scheduled on machine $i$.
The load of machine $i$ is denoted by $L_i = \sum_{j \in M_i} p_j$, where $p_j$ is the time required to process job $j$.
Moreover, $\mathcal{M}_{\max}$ denotes the set of critical machines.
An improving $k$-swap neighbor is obtained by swapping a set $S_{i'} \subseteq M_{i'}$ of $k'$ jobs on a critical machine $i' \in \mathcal{M}_{\max}$ with a set $S_{i''} \subseteq M_{i''}$ of $k''$ jobs on a non-critical machine $i'' \notin \mathcal{M}_{\max}$ such that $k' + k'' \leq k$ and $0 < \sum_{j \in S_{i'}} p_j - \sum_{j \in S_{i''}} p_j < L_{i'} - L_{i''}$.

To describe the algorithm, for the sake of simplicity, we consider swapping exactly $k$ jobs with each other. We can run the proposed algorithm $k-1$ more times for the case of swapping less than $k$ jobs.
In this algorithm, first, we divide the set of jobs on a critical machine $i'$ and a non-critical machine $i''$ into two sets $A$ and $B$ uniformly at random (we will see that, with large enough probability, $\frac{k}{2}$ out of $k$ jobs for which we are looking, will be assigned to $A$ and the remaining $\frac{k}{2}$ jobs to $B$).
Then, we compute the sets $A_{\Sigma} = \{ \sum_{j \in S_A \cap M_{i'}} p_j - \sum_{j \in S_A \cap M_{i''}} p_j : S_A \subseteq A, |S_A| = \ceil{\frac{k}{2}} \}$ and $B_{\Sigma} = \{ \sum_{j \in S_B \cap M_{i'}} p_j - \sum_{j \in S_B \cap M_{i''}} p_j: S_B \subseteq B, |S_B| = \floor{\frac{k}{2}} \}$, where these two sets are all possible changes in machine loads of swapping exactly $\frac{k}{2}$ jobs assigned to the sets $A$ and $B$, respectively.
If there exist $x \in A_{\Sigma}$ and $y \in B_{\Sigma}$ such that $0 < x + y < L_{i'} - L_{i''}$, there exists an improving $k$-swap. To determine the corresponding $k$-swap, we need to know which jobs are used to find the values $x$ and $y$. Therefore, we should also store references to the corresponding jobs of each value in $A_{\Sigma}$ and $B_{\Sigma}$, but for simplicity of exposition, we leave out this reference in the remainder of the section.
To find values $x \in A_{\Sigma}$ and $y \in B_{\Sigma}$ satisfying the above criteria, we sort the values in the set $B_{\Sigma}$ in non-decreasing order.
Then, for every $x \in A_{\Sigma}$, we search for some $y$ in the sorted list of the values in $B_{\Sigma}$, such that $y \in (-x, L_{i'} - L_{i''} - x)$.
This can be implemented by binary search.
As long as no improving $k$-swap has been found, we repeat the above procedure for every combination of critical machine $i'$ and non-critical machine $i''$.
Our algorithm, which we refer to as \emph{randomized neighborhood search}, is summarized in Algorithm~\ref{Alg:k-Swap}.

\begin{algorithm}[ht]
\caption{Randomized Neighborhood Search}\label{Alg:k-Swap}

\begin{algorithmic}[1]
\Require A schedule $\sigma = (M_1, M_2, \dotsc , M_m)$ and a value $k$.
\Ensure A Boolean variable, \{True, False\}, which indicates whether $\sigma$ has been improved or not.

\ForEach {$i' \in \mathcal{M}_{\max}$ and $i'' \notin \mathcal{M}_{\max}$}

\State Initialize $A = B = A_{\Sigma} = B_{\Sigma} = \varnothing$
\State {Assign each $j \in M_{i'} \cup M_{i''}$ uniformly at random to $A$ or $B$}

\State {$A_{\Sigma} = \{ \sum_{j \in S_A \cap M_{i'}} p_j - \sum_{j \in S_A \cap M_{i''}} p_j : S_A \subseteq A, |S_A| = \ceil{\frac{k}{2}} \}$}

\State {$B_{\Sigma} = \{ \sum_{j \in S_B \cap M_{i'}} p_j - \sum_{j \in S_B \cap M_{i''}} p_j: S_B \subseteq B, |S_B| = \floor{\frac{k}{2}} \}$}

\State {Sort the values in $B_{\Sigma}$ in non-decreasing order}
\ForEach {$x \in A_{\Sigma}$}
\State {Use binary search to find a value $y \in B_{\Sigma}$ such that $y \in (-x, L_{i'} - L_{i''} - x)$}
\If {such a value $y$ exists}
\State{Update $\sigma$ by swapping the corresponding jobs to $x$ and $y$}
\State{\textbf{return} True}
\EndIf
\EndFor
\EndFor
\State{\textbf{return} False}
\end{algorithmic}
\end{algorithm}

The remainder of this section is structured as follows. In Subsection~\ref{subsec:probOfSuccess}, we analyze the probability that Algorithm~\ref{Alg:k-Swap} finds an improving $k$-swap neighbor if one exists, and also determine the number of repetitions to be made, to make this probability large enough.
In Subsection~\ref{subsec:RTALG1}, we analyze the running time of Algorithm~\ref{Alg:k-Swap}. 
In Subsection~\ref{subsec:derand}, we mention how the algorithm can be derandomized. 
Finally, in Subsection~\ref{subsec:LBKSwap}, we establish a lower bound on the running time to find an improving $k$-swap neighbor.

\subsection{Probability of Success\label{subsec:probOfSuccess}}
As mentioned, the proposed algorithm has a one-sided error as even when an improving $k$-swap neighbor exists, the algorithm might not find one.
If this is the case, we say that the algorithm \emph{fails}; in case it does find an improving neighbor, we say that the algorithm is \emph{successful}.
Suppose that there exists an improving $k$-swap swapping exactly $k$ jobs. Then, Algorithm~\ref{Alg:k-Swap} is successful if exactly $\ceil{\frac{k}{2}}$ of these jobs are assigned to set $A$ and the remaining $\floor{\frac{k}{2}}$ jobs are assigned to set $B$.
As there are $\binom{k}{\ceil{\frac{k}{2}}}$ ways of assigning $\frac{k}{2}$ out of $k$ jobs to the set $A$ and the remaining jobs to the set $B$, the probability of success of the algorithm 
is at least $\dfrac{\binom{k}{\ceil{\frac{k}{2}}}}{2^k}$.
This implies that the probability of failure of the algorithm is at most $1 - \dfrac{\binom{k}{\ceil{\frac{k}{2}}}}{2^k}$. 
Repeating Algorithm~\ref{Alg:k-Swap} $\dfrac{2^k}{\binom{k}{\ceil{\frac{k}{2}}}}$ times lead to the following lemma.

\begin{lemma}
\label{lem:probFailureOfKSwap}
By repeating the randomized neighborhood search (Algorithm~\ref{Alg:k-Swap}) $\dfrac{2^k}{\binom{k}{\ceil{\frac{k}{2}}}}$ times, the probability of failure in finding an improving schedule can be bounded from above by $\frac{1}{e}$.
\end{lemma}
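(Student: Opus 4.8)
The plan is to treat the repeated runs as independent Bernoulli trials and to bound the probability that all of them fail by a clean exponential estimate. Write $p = \dfrac{\binom{k}{\ceil{\frac{k}{2}}}}{2^k}$ for the lower bound on the success probability of a single run, which was established in the discussion preceding the statement: conditioned on the existence of an improving $k$-swap, one execution succeeds whenever the random partition places exactly $\ceil{\frac{k}{2}}$ of the $k$ relevant jobs in $A$ (and the remaining $\floor{\frac{k}{2}}$ in $B$), an event of probability at least $p$.

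The key observation I would use is that the individual runs are mutually independent, because each execution redraws the assignment of every job to $A$ or $B$ from scratch using fresh coin flips. Consequently, if the algorithm is run $t$ times, the probability that every single run fails to report an improvement is at most $(1-p)^t$, using that the per-run failure probability is at most $1-p$.

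First I would substitute $t = \dfrac{2^k}{\binom{k}{\ceil{\frac{k}{2}}}} = 1/p$ and invoke the standard inequality $1 - x \le e^{-x}$, valid for all real $x$, applied with $x = p$. This yields $(1-p)^{1/p} \le \bigl(e^{-p}\bigr)^{1/p} = e^{-1}$, which is exactly the claimed bound of $\frac{1}{e}$ on the failure probability. The only bookkeeping point is that $1/p$ need not be an integer; since additional runs can only decrease the failure probability, it suffices to run $\ceil{1/p}$ times, and monotonicity gives $(1-p)^{\ceil{1/p}} \le (1-p)^{1/p} \le \frac{1}{e}$.

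There is essentially no hard step here: the argument reduces to a one-line application of the independence of the trials together with $1 - x \le e^{-x}$. The only subtle points worth stating carefully are the independence of the repetitions and the fact that $p$ is merely a lower bound on the per-run success probability, so that $1-p$ is a legitimate upper bound on the per-run failure probability that appears in the product.
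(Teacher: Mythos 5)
Your proof is correct and follows essentially the same route as the paper's: bound the per-run failure probability by $1-p$ with $p = \binom{k}{\lceil k/2\rceil}/2^k$, use independence of the repetitions, and apply $1-x \le e^{-x}$ to get $(1-p)^{1/p} \le e^{-1}$. Your added care about independence, $p$ being only a lower bound, and the non-integrality of $1/p$ are minor refinements the paper glosses over, not a different argument.
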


\subsection{Running Time\label{subsec:RTALG1}}
For given machines $i' \in \mathcal{M}_{\max}$ and $i'' \notin \mathcal{M}_{\max}$, determining the values for sets $A_{\Sigma}$ and $B_{\Sigma}$ can be done in $O(n^{\ceil{\frac{k}{2}}})$ time and sorting the set $B_{\Sigma}$ in time $O(k \cdot n^{\floor{\frac{k}{2}}} \cdot \log n)$. In the worst case, every element $x \in A_{\Sigma}$ needs to be checked, and the binary search over the set $B_{\Sigma}$ takes $O(k \log n)$ time, resulting in the worst-case time complexity of $O(k \cdot n^{\ceil{\frac{k}{2}}} \cdot \log n)$.
As we consider all the combinations of critical and non-critical machines, the above procedure is repeated $O(m^2)$ times, leading to the following lemma.

\begin{lemma}
\label{lem:timeComplexityOfKSwap}
The randomized neighborhood search (Algorithm~\ref{Alg:k-Swap}) terminates in $O(m^2 \cdot k \cdot n^{\ceil{\frac{k}{2}}} \cdot \log n)$ steps.
\end{lemma}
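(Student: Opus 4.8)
The plan is to bound the running time contribution of each phase of Algorithm~\ref{Alg:k-Swap} separately and then sum them. First I would analyze the two \emph{enumeration} loops that build $A_\Sigma$ and $B_\Sigma$. The loop over subsets $S_1 \subseteq A$ with $|S_1| = \ceil{\frac{k}{2}}$ iterates at most $\binom{|A|}{\ceil{k/2}} = O(n^{\ceil{k/2}})$ times, and computing each value $v = \sum_{j \in S_1 \cap M_1} p_j - \sum_{j \in S_1 \cap M_2} p_j$ costs $O(k)$ additions since $|S_1| = \ceil{k/2} \le k$. Hence this loop runs in $O(k \cdot n^{\ceil{k/2}})$. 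The loop building $B_\Sigma$ uses subsets of size $\floor{k/2} \le \ceil{k/2}$, so it contributes at most $O(k \cdot n^{\floor{k/2}}) = O(k \cdot n^{\ceil{k/2}})$, which is absorbed into the same bound.

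Next I would handle the \emph{sorting} step. Since $|B_\Sigma| = O(n^{\floor{k/2}}) = O(n^{\ceil{k/2}})$, sorting costs $O(|B_\Sigma| \log |B_\Sigma|) = O(n^{\ceil{k/2}} \log(n^{\ceil{k/2}})) = O(k \cdot n^{\ceil{k/2}} \log n)$, using that $\log(n^{\ceil{k/2}}) = \ceil{k/2} \log n = O(k \log n)$. This is the step that introduces the logarithmic factor into the final bound.

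Finally I would analyze the \emph{search} loop over $x \in A_\Sigma$. Here $|A_\Sigma| = O(n^{\ceil{k/2}})$, and for each $x$ we perform a binary search in the sorted list $B_\Sigma$ for a value $y \in (-x, \Delta - x)$, costing $O(\log |B_\Sigma|) = O(k \log n)$ per iteration; the total is again $O(k \cdot n^{\ceil{k/2}} \log n)$. Summing the three phases yields the claimed bound $O(k \cdot n^{\ceil{k/2}} \log n)$, since all three contributions are of this order.

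I do not expect a genuine obstacle here, as the argument is a straightforward accounting of loop iterations, per-iteration work, and the cost of sort-then-binary-search. The only point requiring mild care is the bookkeeping on the binomial coefficient bounds: I would justify $\binom{n}{\ceil{k/2}} = O(n^{\ceil{k/2}})$ by treating $k$ as the parameter explicit in the bound rather than hiding it in the constant, which is precisely why the factor $k$ appears in the statement instead of being swallowed. I would also note that the per-value computation of $v$ is $O(k)$ rather than $O(1)$, so that this $O(k)$ factor is consistently tracked through the enumeration, sorting, and search phases.
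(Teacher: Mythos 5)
Your proposal is correct and follows essentially the same route as the paper's proof: a phase-by-phase accounting of the enumeration of subsets into $A_\Sigma$ and $B_\Sigma$ (with the $O(k)$ cost per value), the sorting of $B_\Sigma$, and the binary search for each $x \in A_\Sigma$, each bounded by $O(k \cdot n^{\lceil k/2 \rceil} \log n)$. You even spell out the justification $\log\bigl(n^{\lceil k/2 \rceil}\bigr) = O(k \log n)$ that the paper leaves implicit, so nothing is missing.
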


To obtain a success probability of at least $1 - \frac{1}{e}$, we repeat the randomized neighborhood search $\dfrac{2^k}{\binom{k}{\ceil{\frac{k}{2}}}}$ times (see Lemma~\ref{lem:probFailureOfKSwap}).
As $\binom{2n}{n} = \Theta\left(\dfrac{4^n}{\sqrt{\pi n}}\right)$, according to the asymptotic growth of the $n$th central binomial coefficient~\cite{luke1969special}, we get 
\begin{align*}
    \binom{k}{\ceil{\frac{k}{2}}} = \binom{k}{\floor{\frac{k}{2}}} =  \Theta \binom{2\floor{\frac{k}{2}}}{\floor{\frac{k}{2}}} = \Theta \left(\frac{4^{\floor{\frac{k}{2}}}}{\sqrt{\pi \floor{\frac{k}{2}}}}\right) = \Theta\left(\frac{2^k}{\sqrt{k}}\right).
\end{align*}
Therefore, we have the following observation.

\begin{observation}
\label{obs:stirling}
$\dfrac{2^k}{\binom{k}{\ceil{\frac{k}{2}}}} = \Theta(\sqrt{k})$.
\end{observation}
In Algorithm~\ref{Alg:k-Swap}, we consider swapping exactly $k$ jobs. However, for obtaining a $k$-swap neighbor, we swap at most $k$ jobs. 
Therefore, we repeat Algorithm~\ref{Alg:k-Swap} to consider swapping exactly $\mathcal{K} = 1, \dotsc, k$ jobs.
As we have
\begin{align*}
    k \cdot n^{\frac{k}{2}} + (k-1) \cdot n^{\frac{k-1}{2}} + \dotsc + 1 \cdot n^{\frac{1}{2}}  
    \leq k \cdot (n^{\frac{k}{2}} + n^{\frac{k-1}{2}} + \dotsc + n^{\frac{1}{2}}),
\end{align*}
and $n^{\frac{k}{2}} + n^{\frac{k-1}{2}} + \dotsc + n^{\frac{1}{2}} = O(n^{\frac{k}{2}})$, the time complexity of finding a $k$-swap neighbor remains $O(m^2 \cdot k \cdot n^{\ceil{\frac{k}{2}}} \cdot \log n)$.
By Lemma~\ref{lem:probFailureOfKSwap} and Observation~\ref{obs:stirling}, we know that we need to repeat the algorithm $\Theta(\sqrt{k})$ times to obtain a success probability of at least $1 - \frac{1}{e}$, which leads to the following theorem.

\randomizedtheorem*

\subsection{Derandomization\label{subsec:derand}}
In this subsection, we describe a method for derandomizing the proposed randomized algorithm. 
This method is based on a combinatorial structure called \emph{splitters} which was introduced by Naor et al.~\cite{naor1995splitters} to derandomize the \emph{color coding} technique, introduced by Alon et al.~\cite{alon1995color}. Formally, an ($n, k, l$)-splitter is a family of hash functions $H$ from $\{1, ... , n\}$ to $\{1, ... , l\}$ such that for all $S \subseteq \{1, ... , n\}$ with $|S| = k$, there exists a function $h \in H$ that splits $S$ evenly; i.e., for every $j, j' \in \{1, ..., l\}$, $|h^{-1}(j) \cap S|$ and $|h^{-1}(j') \cap S|$ differ by at most 1. It has been shown~\cite{alon1995color, naor1995splitters} that there exists an ($n, k, k^2$)-splitter of size $k^6 \cdot \log k \cdot \log n$ which is computable in time complexity of $O(k^6 \cdot \log k \cdot n \cdot \log n)$.

In our randomized algorithm, we divide the set of jobs uniformly at random into two sets $A$ and $B$. 
If there exists a set of $k$ jobs such that by swapping them we are able to improve the schedule, the algorithm is successful if $\ceil{\frac{k}{2}}$ of these jobs are in the set $A$ and the remaining are in the set $B$. In the derandomized version of this algorithm, instead of dividing the set of jobs uniformly at random into two sets $A$ and $B$, we divide the set of jobs in several different ways based on the functions in an ($n, k, k^2$)-splitter. We will show that based on at least one of the functions, $\ceil{\frac{k}{2}}$ of the jobs, for which we are looking, will be assigned to the set $A$, and the remaining will be assigned to the set $B$.
The idea of how to derandomize our algorithm is explained in more detail below.

If we map each of the jobs to a number $\{ c_1, c_2, ... , c_{k^2} \}$ based on the functions in an ($n, k, k^2$)-splitter, according to at least one of these functions, those $k$ jobs, for which we are looking, will be mapped to different numbers. As we want $\ceil{\frac{k}{2}}$ of these $k$ jobs to be assigned to the set $A$ and the remaining to be assigned to the set $B$, we map the numbers $\{ c_1, c_2, ... , c_{k^2} \}$ to only two numbers 0 and 1, and then, we say those jobs which are mapped to 0 are in the set $A$ and the jobs which are mapped to 1 are in the set $B$ (or vice versa). 
To make sure that after mapping the numbers $\{ c_1, c_2, ... , c_{k^2} \}$ to 0 and 1, $\ceil{\frac{k}{2}}$ of these jobs are mapped to 0 (or 1) and the rest are mapped to 1 (or 0), we consider $2k^2$ different mappings $\{ m_1, m_2, ... , m_{2k^2} \}$ of the numbers $\{ c_1, c_2, ... , c_{k^2} \}$ to 0 and 1 as follows.
In mapping $m_1$, we map the numbers $\{ c_1, c_2, ... , c_{\ceil{\frac{k^2}{2}}} \}$ to 0 and the rest to 1.
Mapping $m_i$, where $i > 1$ is even, is obtained from $m_{i-1}$ by mapping the number $c_{\ceil{\frac{k^2}{2}} + \frac{i}{2}}$ to 0.
Mapping $m_i$, where $i > 1$ is odd, is obtained from $m_{i-1}$ by mapping the number $c_{\floor{\frac{i}{2}}}$ to 1. Intuitively, we are doing circular shifts until we obtain a mapping which is the complement of $m_1$ (see Fig. \ref{fig:mappings}).
In the following lemma, we show that in at least one of these mappings, $\ceil{\frac{k}{2}}$ of the jobs, for which we are looking, are colored with 0 and the rest are colored with 1.

\begin{figure}[t]
\centering
\begin{equation*}
\begin{matrix}
    & c_1 & c_2 & \cdots \!\! & c_{\lceil \frac{k^2}{2}\rceil} \!\!\!\!  & c_{\lceil \frac{k^2}{2}\rceil + 1} \!\!\!\! & c_{\lceil \frac{k^2}{2}\rceil + 2} \!\! & \cdots  & c_{k^2 - 1} \!\! & c_{k^2} \\\hline
    m_1 & 0 & 0 & \cdots & 0 & 1 & 1 & \cdots & 1 & 1 \\
    m_2 & 0 & 0 & \cdots & 0 & 0 & 1 & \cdots & 1 & 1 \\
    m_3 & 1 & 0 & \cdots & 0 & 0 & 1 & \cdots & 1 & 1 \\
     & \vdots &  & \ddots &  & \vdots & & \ddots &  & \vdots \\
    m_{2k^2} & 1 & 1 & \cdots & 1 & 0 & 0 & \cdots & 0 & 0
\end{matrix}    
\end{equation*}
\caption{Mappings of the numbers $\{ c_1, c_2, ... , c_{k^2} \}$ to 0 and 1.}
\label{fig:mappings}
\end{figure}

\begin{lemma}
\label{lem:derandomizationMapping}
Let $\mathcal{C}$ be a set of $k$ jobs that are mapped to different numbers from the set $\{ c_1, c_2, ... , c_{k^2} \}$. In at least one of the mappings $\{ m_1, m_2, ... , m_{2k^2} \}$, $\ceil{\frac{k}{2}}$ of the jobs in $\mathcal{C}$ are colored with 0 and the rest are colored with 1.
\end{lemma}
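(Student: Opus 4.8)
The plan is to track the number of jobs from $\mathcal{C}$ that receive color $0$ as we move through the sequence of mappings $m_1, m_2, \ldots, m_{2k^2}$, and argue by a discrete intermediate-value argument that this count must equal $\ceil{\frac{k}{2}}$ at some point. First I would set up notation: for each mapping $m_i$, let $f(i)$ denote the number of jobs in $\mathcal{C}$ that are colored $0$ under $m_i$. Since the $k$ jobs of $\mathcal{C}$ are mapped to $k$ \emph{distinct} numbers among $\{c_1, \ldots, c_{k^2}\}$ (this is guaranteed by the splitter property), each job's color under $m_i$ is determined by where its number sits relative to the current ``0-block'' of that mapping.

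The key observation is that $f$ changes by at most $1$ between consecutive mappings. This follows directly from the construction: each step from $m_{i-1}$ to $m_i$ flips exactly one number $c_\ell$ from $1$ to $0$ (even steps) or from $0$ to $1$ (odd steps). Since the $k$ jobs occupy distinct numbers, flipping a single number $c_\ell$ affects at most one job of $\mathcal{C}$ — namely, it changes $f$ by $+1$ if some job sits on $c_\ell$ and that number went to $0$, by $-1$ if a job sits there and it went to $1$, and by $0$ otherwise. Hence $|f(i) - f(i-1)| \le 1$ for all $i$.

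Next I would pin down the endpoints. In $m_1$, the numbers $\{c_1, \ldots, c_{\ceil{k^2/2}}\}$ are colored $0$ and the rest $1$; and the final mapping $m_{2k^2}$ is the complement of $m_1$, so the numbers colored $0$ in $m_{2k^2}$ are exactly those colored $1$ in $m_1$. Therefore $f(1)$ and $f(2k^2)$ count the jobs of $\mathcal{C}$ lying in complementary halves, giving $f(1) + f(2k^2) = k$. Consequently one of $f(1), f(2k^2)$ is at most $\floor{k/2} \le \ceil{k/2}$ and the other is at least $\ceil{k/2}$. So $\ceil{k/2}$ lies (weakly) between the two endpoint values of $f$.

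Finally, since $f$ takes integer values, changes by at most $1$ per step, and the target value $\ceil{\frac{k}{2}}$ lies between $f(1)$ and $f(2k^2)$, a discrete intermediate value argument guarantees some index $i^*$ with $f(i^*) = \ceil{\frac{k}{2}}$; under that mapping $\ceil{\frac{k}{2}}$ jobs of $\mathcal{C}$ are colored $0$ and the remaining $\floor{\frac{k}{2}}$ are colored $1$, as claimed. The main thing to verify carefully — the only place the argument could slip — is that the complement relationship between $m_1$ and $m_{2k^2}$ really holds under the prescribed circular-shift construction, and that every intermediate step flips exactly one symbol (so no step can jump $f$ by more than $1$); once the step-size-one property and the complementary endpoints are confirmed, the intermediate value conclusion is immediate.
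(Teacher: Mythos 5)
Your proof is correct and follows essentially the same route as the paper's: both track the number of jobs of $\mathcal{C}$ colored $0$ across the mappings, note that consecutive mappings change this count by at most $1$, use the complementarity $f(1) + f(2k^2) = k$, and conclude by a discrete intermediate-value argument that the value $\ceil{\frac{k}{2}}$ is attained. Your write-up is simply a more detailed version of the paper's argument, including an explicit check of the unit-step and complementary-endpoint properties that the paper takes directly from the construction.
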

\begin{proof}
Let $A_i$ be the set of the jobs in $\mathcal{C}$ that are colored with 0 in mapping $m_i$. We have $0 \leq |A_i| \leq k$.
As we have $|A_{2k^2}| = k - |A_{1}|$, and for $i \in \{ 2, \dotsi, 2k^2 \}$ we have $|A_{i-1}| - 1 \leq |A_i| \leq |A_{i-1}| + 1$ (since each mapping $m_i$ is obtained from mapping $m_{i-1}$ by changing the value of only one of the numbers), in one of the mappings we get $|A_i| = \ceil{\frac{k}{2}}$.
\end{proof}

As it has been shown that there exists an ($n, k, k^2$)-splitter of size $k^6 \cdot \log k \cdot \log n$, and we showed that for each function we need to consider $2k^2$ different ways of mapping $k^2$ numbers to only two numbers, we get the following corollary.

\begin{corollary}
\label{cor:deRandomization}
By running the randomized neighborhood search algorithm $O(k^8 \cdot \log k \cdot \log n)$ times based on an ($n, k, k^2$)-splitter, which is computable in time $O(k^6 \cdot \log k \cdot n \cdot \log n)$, we are able to derandomize it.
\end{corollary}

Moreover, according to Lemma~\ref{lem:timeComplexityOfKSwap} and Corollary~\ref{cor:deRandomization}, we have the following theorem.

\derandomizedtheorem*

\subsection{Lower Bound on the Running Time\label{subsec:LBKSwap}}
\label{sec:Lower Bound on the Running Time}
In this part, we establish a lower bound for the problem of finding an improving solution in the $k$-swap neighborhood by reducing the $k$-sum problem to the decision version of this problem. 
In the $k$-sum problem, we are given a set of $q$ integer numbers $S = \{a_1, a_2, \dotsc , a_q\}$, and we need to decide if there are exactly $k$ distinct numbers in $S$ that sum to zero.
Providing this lower bound is based on the following hypothesis, which is called the $k$-sum conjecture~\cite{abboud2013exact}.

\begin{conjecture}
There does not exist any $k \geq 2$, an $\varepsilon > 0$, and a randomized algorithm that succeeds with high probability in solving $k$-sum in time $O(q^{\lceil k/2 \rceil - \varepsilon})$.
\end{conjecture}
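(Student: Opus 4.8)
Since the statement is the well-known \emph{$k$-sum conjecture} — a hardness hypothesis from fine-grained complexity rather than a theorem with a known unconditional proof — proving it in full for arbitrary randomized word-RAM algorithms is beyond current techniques, as it would entail concrete super-linear unconditional lower bounds that nobody knows how to establish. The plan is therefore twofold: first, to prove the matching bound in a restricted but natural model where it is genuinely attainable, and second, to delineate precisely the obstruction that forces the uniform-model version to remain a conjecture.

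For the provable core I would work in the bounded-degree ($s$-linear) decision tree model, where each query is a sign test of a fixed linear form in at most $s$ of the inputs $a_1, \dots, a_n$. Viewing an input as a point $x \in \mathbb{R}^n$, the YES-instances of $k$-sum are exactly the union of the $\binom{n}{k}$ hyperplanes $H_I = \{\, x : \sum_{i \in I} x_i = 0 \,\}$ over all $k$-subsets $I$. A depth-$d$ tree partitions $\mathbb{R}^n$ into at most $2^d$ cells, one per leaf, and the reported answer is constant on each cell; hence no cell may contain both a point on some $H_I$ and a nearby point off every hyperplane. Following Erickson's adversary method, I would exhibit a family of near-degenerate inputs such that certifying each as a NO-instance forces the tree to spend a distinct query separating it from a neighbouring YES-instance, and then bound how many of the $k$-subsets a single $s$-linear form can resolve at once. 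Carrying this counting through yields that $\Omega(n^{\lceil k/2 \rceil})$ queries are necessary, which exactly matches the meet-in-the-middle upper bound of $O(n^{\lceil k/2 \rceil})$ and thus establishes the conjectured bound in this model.

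The hard part — and the very reason the statement is phrased as a conjecture — will be passing from the decision tree model to arbitrary algorithms, and here no proof is available. This gap is genuine rather than technical: Gr\o{}nlund and Pettie showed that $k$-sum admits decision trees of strictly subquadratic query complexity (for $k = 3$), so a decision tree lower bound alone can never reach the uniform-model bound, and unconditional time lower bounds of order $n^{\lceil k/2 \rceil}$ for an explicit problem lie beyond all known methods. I would accordingly not claim a proof for the uniform model; instead the confidence in the conjecture rests on the tightness of the meet-in-the-middle algorithm, on the restricted-model lower bounds just sketched, and on the robustness of $k$-sum under a broad web of fine-grained reductions. As elsewhere in this paper, the statement is adopted as a hypothesis from which conditional consequences are drawn, not derived from weaker assumptions.
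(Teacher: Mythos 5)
Your core judgment is correct and matches the paper exactly: this statement is the $k$-sum conjecture, a fine-grained hardness hypothesis, and the paper does not prove it either --- it states it as a conjecture, cites it to the literature (Abboud et al.), and uses it purely as an assumption from which the subsequent conditional lower bound on searching the $k$-swap neighborhood is derived. So declining to give a uniform-model proof and treating the statement as an imported hypothesis is precisely the paper's own approach.

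One caveat on the ``provable core'' you sketch: the $\Omega(n^{\lceil k/2 \rceil})$ adversary bound of Erickson holds for $s$-linear decision trees only when $s \leq k$ (queries that sign-test sums of at most $k$ inputs); for larger $s$ the bound is simply false --- as your own citation of Gr{\o}nlund and Pettie shows for $k=3$, and as later work (e.g., Kane, Lovett, and Moran) sharpens to near-linear query complexity for $2k$-linear queries. So if you include the restricted-model argument, you must state the restriction $s \leq k$ explicitly, otherwise the counting step (``bound how many $k$-subsets a single $s$-linear form can resolve'') does not go through. Since the paper itself draws no such restricted-model result, this is supplementary material rather than a gap relative to the paper.
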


\ksumlowerbound*
\begin{proof}
We reduce the $k$-sum problem to the decision version of the problem of finding an improving solution in the $k$-swap neighborhood.
Hereto, given an instance for the $k$-sum problem, we need to define a scheduling instance and a starting schedule such that an improving $k$-swap neighbor exists if and only if we have a yes-instance for the $k$-sum problem.
Consider an instance for the $k$-sum problem on a set $S = \{a_1, a_2, \dotsc , a_q\}$, and assume w.l.o.g. that $\sum_{j: a_j \in S} a_j \geq 0$.
We construct a scheduling instance on $m=2$ machines with $n = q + k + 1$ jobs.
The first $q$ jobs, called the \emph{$k$-sum jobs}, have processing time $p_j = q \cdot a_j + 1$ if $a_j \geq 0$ and $p_j = -q \cdot a_j$ whenever $a_j < 0$. 
Jobs $j = q+1, \ldots, q+k-1$ have processing time $p_j = 3 \cdot q \cdot \sum_{j:a_j \in S} |a_j|$, and the remaining two jobs have processing times $p_{q+k} = 3 \cdot q \cdot \sum_{j:a_j \in S} |a_j| + 1$ and $p_{q+k+1} = q \cdot (\sum_{j: a_j \in S} a_j + 3 \cdot k \cdot \sum_{j:a_j \in S} |a_j|)$.
The starting schedule is obtained by assigning all $k$-sum jobs with $a_j \geq 0$ as well as jobs $q+1, \ldots, q+k$ to machine~1 and all $k$-sum jobs with $a_j < 0$ and job $q+k+1$ to machine~2. Then, $L_1  - L_2 = \theta + 1$, where $\theta$ is the number of $k$-sum jobs with $a_j \geq 0$.

Suppose that there is a set $\Gamma \subseteq S$ of distinct numbers such that $\sum_{j:a_j \in \Gamma} a_j = 0$. A set $U_1$ of the corresponding jobs of these numbers are in $M_{1}$ and a set $U_2$ are in $M_{2}$, such that $|U_1| + |U_2| = |\Gamma|$.  We denote the summation of the processing times of the jobs in $U_1$ and $U_2$ by $p(U_1)$ and $p(U_2)$, respectively. Note that $p(U_1) > p(U_2)$. Moreover, let $A$ and $B$ denote the summation of the absolute values of positive and negative numbers in $\Gamma$, respectively.  Since we have $A - B = 0$, we get
\begin{align*}
   p(U_1) - p(U_2) = q \cdot (A - B) + |U_1| = |U_1| \leq \theta < L_1 - L_2.  
\end{align*}
As we have $0 < p(U_1) - p(U_2) < L_1 - L_2$, we are able to improve our schedule by swapping the jobs in $U_1$ with the jobs in $U_2$.

To show that an improving $k$-swap neighbor implies that the corresponding instance for $k$-sum is a yes-instance, suppose that there are a set $U_1 \subseteq M_{1}$ and a set $U_2 \subseteq M_{2}$ of jobs such that $0 < p(U_1) - p(U_2) < L_1 - L_2$ and $|U_1| + |U_2| = k$. Since $L_1 - L_2 = \theta + 1 \leq q + 1$,  all the jobs in $U_1 \cup U_2$ are $k$-sum jobs.
More precisely, if we have a job $j \in U_1$ with $p_j \geq 3 \cdot q \cdot \sum_{j:a_j \in S} |a_j|$, we get $p(U_1) - p(U_2) \geq 2q$, when job $q+k+1$ is not in $U_2$, and if we have job $q+k+1$ in $U_2$, we get $p(U_1) - p(U_2) < 0$.
Let $U_1^S \subseteq S$ and $U_2^S \subseteq S$ denote the set of corresponding numbers in $S$ to $k$-sum jobs in $U_1$ and $U_2$, respectively.
As all the jobs in $U_1 \cup U_2$ are $k$-sum jobs, We have
\begin{align*}
p(U_1) - p(U_2) = q \cdot d + |U_1|,
\end{align*}
where 
$d = \sum_{j: a_j \in U_1^S} |a_j| - \sum_{j: a_j \in U_2^S} |a_j|$. Since we have, $0 < p(U_1) - p(U_2) < L_1 - L_2$, $d$ must be equal to zero. Since we have $d = 0$, there exist $k$ distinct numbers in $S$ that sum to zero.
\end{proof}

\section{Running Time of Finding a Local Optimum}
So far, we have focused on finding an improving neighbor in the $k$-swap neighborhood. In this section, we consider the number of iterations it takes to converge to a local optimum. First, in Subsection~\ref{subsec:2SwapRunningTime}, we consider the regular swap ($k=2$) and provide an upper bound of $O(n^4)$ for the case of having two machines in our schedule. 
Next, in Subsection~\ref{subsec:3SwapRunningTime}, we provide an instance and a starting solution for which there exists a sequence of improving $3$-swaps that has an exponential length, even on $m=2$ machines.

\subsection{Obtaining a 2-swap Optimal Solution\label{subsec:2SwapRunningTime}}
In this part, we restrict ourselves to the case in which there are only $m=2$ machines.
To provide an upper bound on the number of iterations required in obtaining a 2-swap optimal solution, we first give a bound on the number of times that the machine that is critical changes. Then, by bounding the number of iterations that a certain machine can remain critical after a 2-swap, we can bound the number of iterations until we have found a 2-swap optimal solution.

\begin{lemma}
\label{lem:remainingCritical}
A critical machine remains critical for at most $O(n^2)$ iterations in the procedure of obtaining a 2-swap optimal solution whenever the number of machines is restricted to $m=2$.
\end{lemma}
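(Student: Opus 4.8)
The plan is to track a purely combinatorial potential on the job set of the critical machine that drops by at least one unit at every improving step while staying in a range of size $O(n^2)$. First I would fix a maximal run of consecutive iterations during which machine~$1$ is the critical machine (the case of machine~$2$ being critical is symmetric). Throughout such a run $\Delta = L_1 - L_2 > 0$, and, recalling the characterization of improving steps, a $2$-swap improvement moves a set $S_1 \subseteq M_1$ to machine~$2$ and a set $S_2 \subseteq M_2$ to machine~$1$ with $|S_1| + |S_2| \le 2$ and $0 < p(S_1) - p(S_2) < \Delta$, where $p(\cdot)$ denotes the total processing time of a set. Since $p(S_1) > p(S_2) \ge 0$, the set $S_1$ must be non-empty, so the admissible step types are exactly $(|S_1|,|S_2|) \in \{(1,0),(2,0),(1,1)\}$; in particular at most one job is ever moved \emph{onto} machine~$1$ in a single step, because a step with net load flowing onto the critical machine cannot decrease the makespan.

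Next I would index the $n$ jobs in order of non-decreasing processing time and let $\mathrm{rank}(j) \in \{1,\dots,n\}$ denote the position of job $j$ in this order, so that $p_i < p_j$ implies $\mathrm{rank}(i) < \mathrm{rank}(j)$. The potential I propose is
\[
  \Phi = \sum_{j \in M_1} \mathrm{rank}(j).
\]
I would then verify that every improving step of the run decreases $\Phi$ by at least $1$. In the jump-type cases $(1,0)$ and $(2,0)$ jobs are only removed from $M_1$, so $\Phi$ drops by $\sum_{j \in S_1}\mathrm{rank}(j) \ge 1$. In the swap case $(1,1)$, writing $S_1 = \{j\}$ and $S_2 = \{i\}$, the improving condition forces $p_j = p(S_1) > p(S_2) = p_i$, hence $\mathrm{rank}(j) > \mathrm{rank}(i)$ and $\Phi$ falls by $\mathrm{rank}(j) - \mathrm{rank}(i) \ge 1$. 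Finally, $\Phi$ is a sum of distinct ranks from $\{1,\dots,n\}$, so $0 \le \Phi \le \sum_{r=1}^{n} r = \tfrac{n(n+1)}{2}$; since $\Phi$ is non-negative and strictly decreases by at least $1$ per iteration of the run, the run has length at most $\tfrac{n(n+1)}{2} = O(n^2)$.

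The main obstacle, and the only point needing care, is the swap case: one must rule out the a priori dangerous configuration in which a net \emph{decrease} in processing load on the critical machine is accompanied by a net \emph{increase} in total rank (which would make $\Phi$ go up). This is precisely what the budget $|S_1| + |S_2| \le 2$ forbids, since $S_1 \neq \varnothing$ leaves room for at most a single job in $S_2$, so exactly one strictly smaller — hence strictly lower-ranked — job can replace the job removed. For larger $k$ this monotonicity of $\Phi$ would fail (e.g.\ one heavy job could be replaced by several lighter but collectively higher-ranked jobs), which is consistent with the exponential lower bound claimed for $k=3$; the argument above is therefore tailored to the structure of the $2$-swap neighborhood.
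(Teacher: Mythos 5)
Your proof is correct and follows essentially the same route as the paper: both use the potential $\Phi = \sum_{j \in M_1} \mathrm{rank}(j)$ over ranks induced by sorted processing times, observe that every improving $2$-swap while machine~$1$ stays critical strictly decreases $\Phi$, and conclude from $0 \le \Phi \le O(n^2)$. Your write-up is somewhat more explicit than the paper's (enumerating the admissible step types $(1,0)$, $(2,0)$, $(1,1)$), but the underlying argument is identical.
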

\begin{proof}
W.l.o.g., we assume that $p_1 \leq p_2 \leq \dotsc \leq p_n$, and machine 1 is the critical machine, i.e., $L_1 = L_{\max}$. Define $\Phi = \sum_{j \in M_1} \mathrm{rank}(j)$, where $\mathrm{rank}(j) = j$, for $j \in \{1, 2, \dotsc, n\}$. Then, $\Phi \leq n^2$.
In an improving $2$-swap, in which we swap a job $j \in M_1$ with a job $j' \in M_2$, we have that $p_j > p_{j'}$ and thus the value of $\Phi$ decreases. Similarly, when the $2$-swap is obtained by moving one job or two jobs from machine 1 to machine 2, $\Phi$ will also decrease.
As the values of $\Phi$ are always decreasing, the critical machine changes after at most $O(n^2)$ iterations.
\end{proof}

In the next lemma, we provide an upper bound on the number of times that the critical machine changes. The idea of the proof of the following lemma is similar to Brucker et al.~\cite{brucker1996improving}, who provided an upper bound on the number of jumps in converging to a local optimum for the case of $m=2$ machines.

\begin{lemma}
\label{lem:changingCritical}
The critical machine changes at most $O(n^2)$ times in the procedure of obtaining a 2-swap optimal solution whenever the number of machines is restricted to $m=2$.
\end{lemma}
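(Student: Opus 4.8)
The plan is to track, for each step that changes the critical machine, the net load $\delta$ that crosses from the old critical machine to the other machine, and to show that these crossing amounts strictly decrease over the run of the procedure while being confined to a set of only $O(n^2)$ possible values. First I would record the arithmetic of a single improving $2$-swap. Assuming w.l.o.g.\ that machine $1$ is critical, any $2$-swap removes a set of jobs from machine $1$ and adds a set to it; writing $\delta$ for the resulting net load removed from machine $1$, the new loads are $L_1 - \delta$ and $L_2 + \delta$. The move is improving exactly when $0 < \delta < \Delta$, and the critical machine changes precisely when $\delta > \Delta/2$; hence at every step that changes the critical machine the crossing amount satisfies $\Delta/2 < \delta < \Delta$, and the gap right after the step equals $2\delta - \Delta$. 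Since $k = 2$, the only improving move types that remove positive net load from the critical machine are a single jump ($\delta = p_j$), a double jump ($\delta = p_{j_1} + p_{j_2}$), or a swap ($\delta = p_j - p_{j'}$); in each case $\delta$ is a signed sum of at most two processing times.

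Next I would establish the key monotonicity. Because the procedure performs only improving moves, the makespan $L_{\max} = (P + \Delta)/2$, with $P = \sum_j p_j$ fixed, strictly decreases at every step, so $\Delta$ strictly decreases at every step, not only at the changes. Let $\delta_1, \delta_2, \dots$ be the crossing amounts at the successive critical-machine changes and let $\Delta_i$ be the gap just before change $i$. By the previous paragraph the gap immediately after change $i$ equals $2\delta_i - \Delta_i$, and since $\delta_i < \Delta_i$ this is strictly smaller than $\delta_i$. As the gap only decreases afterwards, the gap $\Delta_{i+1}$ just before the next change satisfies $\Delta_{i+1} \le 2\delta_i - \Delta_i < \delta_i$, while the improving condition at change $i+1$ gives $\delta_{i+1} < \Delta_{i+1}$. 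Chaining these inequalities yields $\delta_{i+1} < \delta_i$, so the sequence of crossing amounts is strictly decreasing.

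Finally I would count. Each $\delta_i$ is, by the first paragraph, one of the $O(n^2)$ values obtainable as a sum or difference of at most two processing times; since the $\delta_i$ are strictly decreasing they are pairwise distinct, so there can be at most $O(n^2)$ critical-machine changes, as claimed. The step I expect to require the most care is pinning down the monotonicity chain and its boundary behaviour: I must argue cleanly that the gap decreases at \emph{every} step, so that $\Delta_{i+1} \le 2\delta_i - \Delta_i$ really holds across the intervening non-change steps, and I must dispose of the degenerate case $\delta = \Delta/2$, in which the two loads become equal. In that case the schedule is perfectly balanced and hence globally optimal, so the procedure terminates and this can occur at most once, leaving the $O(n^2)$ bound intact.
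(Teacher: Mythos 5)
Your proof is correct and follows essentially the same route as the paper's: both hinge on the observation that a change of critical machine with crossing amount $\delta$ satisfies $\Delta/2 < \delta < \Delta$ and leaves a gap $2\delta - \Delta < \delta$, so that (by monotonicity of $\Delta$) the same crossing amount can never trigger a change again, and both then count the $O(n^2)$ possible values of $\delta$ as a signed sum of at most two processing times. The only cosmetic difference is that you unify the three move types into one strictly decreasing sequence of crossing amounts, whereas the paper argues the three cases (single jump, double jump, swap) separately and charges each change to a job or pair of jobs rather than to a value.
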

\begin{proof}
As there are only two machines, we know that $\Delta = |L_1 - L_2|$. Moreover, we know that the value of $\Delta$ is strictly decreasing over the iterations as the load of the critical machine is decreasing whereas the load of the other machine is increasing.
Consider an iteration $t$ in which a 2-swap is performed and assume that the critical machine changes. Assume w.l.o.g. that before performing this 2-swap, machine 1 is the critical machine. Let $\delta < \Delta$ be the amount by which the load of machine 1 decreases, and thus the load of machine 2 increases. As the critical machine changes, we know that $L_1 - \delta < L_2 + \delta$.
For the difference of machine loads after the 2-swap, we have $\Delta' = L_2 + \delta - (L_1 - \delta) = 2 \delta - \Delta = \delta - (\Delta - \delta)$. Using the fact that $\delta < \Delta$, we get $\Delta' < \delta$. 
As $\Delta$ is decreasing, after iteration $t$, any 2-swap in which the load of the critical machine decreases by at least $\delta$, will increase the makespan and therefore will not be performed. In particular, the 2-swap that was done in iteration $t$ cannot be performed anymore. As there are at most $O(n^2)$ possible 2-swaps, the number of iterations in which the critical machine changes is bounded by $O(n^2)$.
\end{proof}

According to Lemma~\ref{lem:remainingCritical} and Lemma~\ref{lem:changingCritical}, we get the following theorem.

\twoswapupper*

\subsection{An Exponential Lower Bound for $k \geq 3$\label{subsec:3SwapRunningTime}}
In this subsection, we show that, by making the wrong choices, an exponential number of iterations is needed to obtain a $k$-swap optimal solution, for $k \geq 3$.
Hereto, we construct a sequence of schedules $\sigma_1, \sigma_2, \dotsc$ where $\sigma_l$ is obtained from $\sigma_{l-1}$ by swapping exactly three jobs.
The instance that we have considered consists of $m=2$ machines and four types of jobs. There is one blocking job, denoted by $\ell$, and three other types of jobs $a_i, b_i$ and $c_i (1 \leq i \leq n)$. We have $p_{a_i} = 2^{n+i+1} + 2^{i-1}$, $p_{b_i} = 2^{n+i}$, $p_{c_i} = 2^{n+i-1} + 2^{i-1}$ and $p_{\ell}$ is a sufficiently large number. We let $\ell \in M_{1}$. Therefore, $\Delta$ has a very large value, and in the procedure of obtaining a $k$-swap optimal solution, the critical machine stays the same. 

For a given schedule $\sigma$, we define a value $\omega_i(\sigma)$ based on the assignment of jobs $a_i, b_i$ and $c_i$ as follows:

\begin{center}
    $\omega_i(\sigma) = \left\{
    \begin{array}{ll}
        0 & \mathrm{if} \quad a_i \in M_{1} \quad  \mathrm{and} \quad  b_i, c_i \in M_{2}, \\
        1 & \mathrm{if} \quad a_i \in M_{2} \quad  \mathrm{and} \quad  b_i, c_i \in M_{1}, \\
        -1 & \mathrm{otherwise}.
    \end{array}
\right.$
\end{center}

We let $\omega(\sigma) = (\omega_1(\sigma), \omega_2(\sigma), \dotsc, \omega_n(\sigma))$. Whenever $\sigma$ is clear from the context, we simply write $\omega$ and $\omega_i$.
We construct the sequence of schedules such that the $\omega$ value of the schedule obtains all binary representations from $0$ to $2^n-1$.

In the initial schedule, we have $a_i \in M_{1}$ and $b_i, c_i \in M_{2}$ (for $i \in \{1, 2, ... , n\}$). Therefore, in the initial schedule, we have $\omega_i = 0$ (for all $i \in \{1, 2, ... , n\}$). 
First, we show that we are able to improve the schedule by moving $a_1$ to $M_2$, and $b_1$ and $c_1$ to $M_1$.

\begin{lemma}
\label{lem:omega1}
Suppose we have $\omega_1 = 0$ and $\ell \in M_1$. Then, $\sigma$ can be improved by making $\omega_1 = 1$.
\end{lemma}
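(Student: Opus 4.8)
The plan is to exhibit the explicit 3-swap that realizes the transition $\omega_1\colon 0 \to 1$ and to verify that it satisfies the improvement criterion $0 < \sum_{j \in S_{k'}} p_j - \sum_{j \in S_{k''}} p_j < \Delta$ stated at the beginning of Section~2. Since $\omega_1 = 0$ means $a_1 \in M_1$ and $b_1, c_1 \in M_2$, while $\omega_1 = 1$ means $a_1 \in M_2$ and $b_1, c_1 \in M_1$, making $\omega_1 = 1$ amounts to moving $a_1$ out of $M_1$ and simultaneously moving $b_1$ and $c_1$ into $M_1$. This is exactly a 3-swap that takes the single job $a_1$ from the critical machine and the two jobs $b_1, c_1$ from the other machine (so $k' = 1$, $k'' = 2$, and $k' + k'' = 3$).

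First I would pin down that machine~1 is the critical machine with a large gap. Because $\ell \in M_1$ and $p_\ell$ is chosen sufficiently large (larger than the total processing time of all other jobs), we have $L_1 = L_{\max}$ regardless of how the remaining jobs are distributed, and moreover $\Delta = L_1 - L_2$ exceeds any polynomial-in-$n$ quantity; in particular $\Delta > 2^n$. Next I would substitute $i = 1$ into the definitions of the processing times, giving $p_{a_1} = 2^{n+2} + 1$, $p_{b_1} = 2^{n+1}$, and $p_{c_1} = 2^n + 1$, and then compute the net transfer of load off the critical machine,
\[
p_{a_1} - p_{b_1} - p_{c_1} = (2^{n+2}+1) - 2^{n+1} - (2^n+1) = 2^{n+2} - 2^{n+1} - 2^n = 2^n .
\]
Hence $p_{a_1} - (p_{b_1} + p_{c_1}) = 2^n > 0$, and combined with $\Delta > 2^n$ this yields $0 < p_{a_1} - (p_{b_1} + p_{c_1}) < \Delta$. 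By the improvement criterion, swapping $a_1$ against $\{b_1, c_1\}$ strictly decreases the makespan, which is precisely the 3-swap making $\omega_1 = 1$.

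The only genuine content here is the arithmetic $2^{n+2} - 2^{n+1} - 2^n = 2^n$ together with the observation that a sufficiently large $p_\ell$ forces $\Delta > 2^n$, so that the load difference lands strictly inside the window $(0, \Delta)$ rather than overshooting it. The main (and here minor) obstacle is confirming that machine~1 remains critical after the swap, so that the move truly improves the schedule instead of merely reshuffling the makespan onto the other machine: this holds because the swap lowers $L_1$ by only $2^n$ while $L_1$ still contains the enormous job $\ell$, consistent with the earlier remark that the critical machine stays fixed throughout the construction. The heavier combinatorial work of chaining such single-index moves into an exponentially long sequence is deferred to the later lemmas; for $\omega_1$ in isolation the argument reduces to this single computation.
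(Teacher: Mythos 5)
Your proposal is correct and takes essentially the same approach as the paper: you exhibit the swap of $a_1$ against $\{b_1, c_1\}$, compute $p_{a_1} - p_{b_1} - p_{c_1} = 2^n > 0$, and invoke the largeness of $\Delta$ (guaranteed by $\ell \in M_1$) to place this difference in the window $(0, \Delta)$. Your additional remarks making $\Delta > 2^n$ explicit and confirming that machine~1 stays critical after the swap are just slightly more detailed versions of what the paper leaves implicit.
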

\begin{proof}
For $\omega_1 = 0$, we have $a_1 \in M_{1}$ and $b_1, c_1 \in M_{2}$. Since $\Delta$ has a very large value, we just need to show that $p_{a_1}$ is greater than $p_{b_1} + p_{c_1}$. We have $p_{a_1} - p_{b_1} - p_{c_1} = 2^n$, which is greater than 0. Therefore, by swapping $a_1$ with $b_1$ and $c_1$, and making $\omega_1 = 1$, we improve the schedule.
\end{proof}

Next, we show that we are able to improve the schedule by moving $a_1, b_2$ and $c_2$ to $M_1$, and $b_1, c_1$ and $a_2$ to $M_2$.

\begin{lemma}
\label{lem:omega2}
Let $\sigma$ be scheduled with $\omega_1 = 1$, $\omega_2 = 0$ and $\ell \in M_1$. Then, $\sigma$ can be improved by making $\omega_1 = 0$ and $\omega_2 = 1$.
\end{lemma}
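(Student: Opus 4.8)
The plan is to reuse the mechanism behind Lemma~\ref{lem:omega1}. Since $\ell \in M_1$ and $p_\ell$ is taken to be huge, machine $1$ stays critical throughout and $\Delta$ remains large; consequently a $3$-swap is improving precisely when it strictly decreases $L_1$, i.e., when the total processing time of the jobs it takes out of $M_1$ exceeds that of the jobs it brings in. I would therefore reduce the statement to producing a short sequence of such load-decreasing $3$-swaps carrying the configuration $\omega_1 = 1, \omega_2 = 0$ to $\omega_1 = 0, \omega_2 = 1$.

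First I would write down both endpoints explicitly. The state $\omega_1 = 1, \omega_2 = 0$ has $b_1, c_1, a_2 \in M_1$ and $a_1, b_2, c_2 \in M_2$, while the target $\omega_1 = 0, \omega_2 = 1$ has $a_1, b_2, c_2 \in M_1$ and $b_1, c_1, a_2 \in M_2$. Hence exactly these six jobs switch machines, three in each direction, and telescoping the powers of two gives $p_{a_1} + p_{b_2} + p_{c_2} - p_{b_1} - p_{c_1} - p_{a_2} = -2^n$, so the overall change in $L_1$ equals $-2^n$ and the combined move is improving.

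The main obstacle is that the obvious two-step route---first set $\omega_2 = 1$, then set $\omega_1 = 0$---is \emph{not} a sequence of improving steps: restoring $\omega_1 = 0$ drags the large job $a_1$ back onto $M_1$ while removing only $b_1, c_1$, which raises $L_1$ by $2^n$. Moreover, since each improving $3$-swap strictly lowers $L_1$ and the two endpoints differ by only $2^n$, the per-step decreases must sum to exactly $2^n$, so no step may overshoot. I would resolve this by routing through an intermediate configuration with $\omega_1 = \omega_2 = -1$, using the two $3$-swaps: swap $\{b_1, c_1\} \subseteq M_1$ with $\{c_2\} \subseteq M_2$, and then swap $\{a_2\} \subseteq M_1$ with $\{a_1, b_2\} \subseteq M_2$. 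Each moves exactly three jobs, and substituting the processing times yields load changes $p_{c_2} - p_{b_1} - p_{c_1} = -(2^n - 1)$ and $p_{a_1} + p_{b_2} - p_{a_2} = -1$, both strictly negative, hence both improving, and summing to the required $-2^n$.

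It then remains to check that the listed jobs sit on the stated machines immediately before each swap---this is automatic from the endpoints, since the first swap touches none of $a_1, a_2, b_2$, and before the second swap we still have $a_2 \in M_1$ and $a_1, b_2 \in M_2$---and that after both swaps the configuration is exactly $\omega_1 = 0, \omega_2 = 1$. I expect the only genuinely delicate points to be the two one-line verifications that each swap decreases $L_1$; everything else is bookkeeping.
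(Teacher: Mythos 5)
Your proof is correct and is essentially the paper's own argument: the paper uses exactly the same two disjoint $3$-swaps (swap $a_2$ with $\{a_1,b_2\}$, giving a decrease of $1$, and swap $\{b_1,c_1\}$ with $c_2$, giving a decrease of $2^n-1$), merely performed in the opposite order, which is immaterial since the two swaps involve disjoint sets of jobs. Your verification that each step strictly decreases $L_1$ (and that this suffices because $\Delta$ is huge) matches the paper's reasoning.
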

\begin{proof}
For $\omega_1 = 1$ and $\omega_2 = 0$, we have $a_2, b_1, c_1 \in M_{1}$ and $a_1, b_2, c_2 \in M_{2}$. We have $p_{a_2} - p_{a_1} - p_{b_2} = 1 > 0$ and $p_{b_1} + p_{c_1} - p_{c_2} = 2^n - 1 > 0$. Therefore, first we swap $a_2$ with $a_1$ and $b_2$, and then, we swap $b_1$ and $c_1$ with $c_2$. Note that $\Delta$ has a large value, which means that these two swaps improve the schedule. After performing these swaps, we get $\omega_1 = 0$ and $\omega_2 = 1$.
\end{proof}

Finally, we have the following lemma.

\begin{figure*}
\centering
\includegraphics[width=0.98\textwidth]{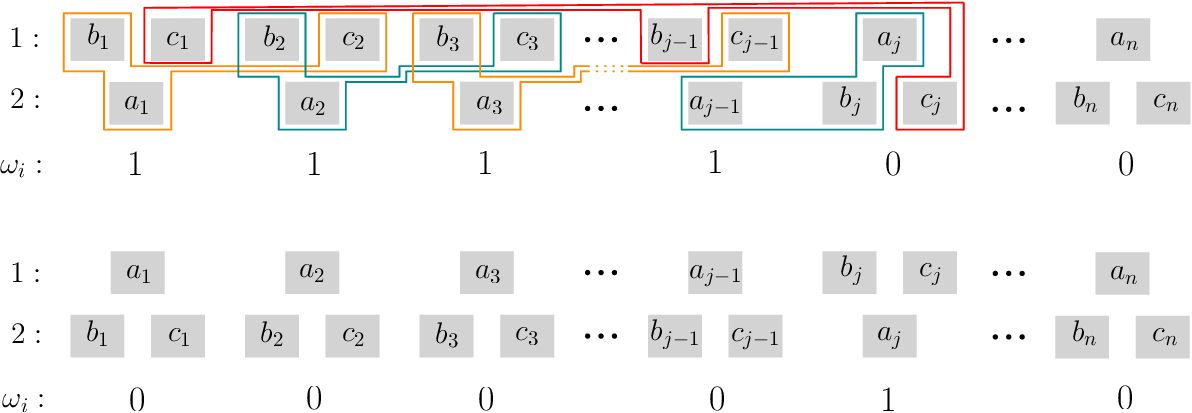}
\caption{Improving $\sigma$ by making $\omega_j = 1$ and  $\omega_i = 0$ for any $j \in \{3, ... , n\}$ and all $i \in \{1, ... , j-1\}$.}
\label{fig:omegai}
\end{figure*}

\begin{lemma}
\label{lem:omegai}
Let $j \in \{3, ... , n\}$ with $\omega_j = 0$. For all $i \in \{1, ... , j-1\}$, we have $\omega_i = 1$. Also, we have $\ell \in M_1$. Then, $\sigma$ can be improved by making $\omega_j = 1$ and  $\omega_i = 0$ for all $i \in \{1, ... , j-1\}$.
\end{lemma}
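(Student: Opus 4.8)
The plan is to exhibit an explicit sequence of $j$ improving $3$-swaps that transforms the given schedule (where $\omega_j = 0$ and $\omega_i = 1$ for all $i < j$) into the target schedule (where $\omega_j = 1$ and $\omega_i = 0$ for all $i < j$), in the same spirit as the proofs of Lemma~\ref{lem:omega1} and Lemma~\ref{lem:omega2}. Since $\ell \in M_1$, the gap $\Delta = L_1 - L_2$ is very large and $M_1$ stays critical throughout, so any $3$-swap that transfers a strictly positive net load from $M_1$ to $M_2$ (this net is automatically far below $\Delta$) strictly decreases the makespan and is therefore improving. Hence the whole task reduces to partitioning the $3j$ jobs that must change machines into $j$ triples, each being a genuine swap whose net transferred load is positive.

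First I would record which jobs move. In the initial configuration $a_j \in M_1$, $b_j, c_j \in M_2$, while for $i < j$ we have $a_i \in M_2$ and $b_i, c_i \in M_1$; the target reverses each of these. So the jobs that must leave $M_1$ are $a_j$ together with $\{ b_i, c_i : i < j \}$, and the jobs that must enter $M_1$ are $b_j, c_j$ together with $\{ a_i : i < j \}$. Because every job moves exactly once, any ordering of the chosen triples is feasible: each ``out'' job sits on $M_1$ and each ``in'' job on $M_2$ until its unique swap. Thus it suffices to design the triples and check the sign of the net load of each, after which a short bookkeeping check confirms every out-job and in-job is used exactly once and every job reaches its target machine.

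The decomposition I would use is the following. First, a top swap exchanging $a_j$ (out) with $\{ a_{j-1}, b_j \}$ (in); its net is $p_{a_j} - p_{a_{j-1}} - p_{b_j} = 2^{j-2} > 0$. Next, for each $i = 1, \dots, j-2$, a swap exchanging $\{ b_i, c_{i+1} \}$ (out) with $\{ a_i \}$ (in); using $p_{b_i} + p_{c_{i+1}} = 2^{n+i+1} + 2^{i}$ and $p_{a_i} = 2^{n+i+1} + 2^{i-1}$, its net is $p_{b_i} + p_{c_{i+1}} - p_{a_i} = 2^{i-1} > 0$. Finally, a swap exchanging $\{ b_{j-1}, c_1 \}$ (out) with $\{ c_j \}$ (in), whose net is $p_{b_{j-1}} + p_{c_1} - p_{c_j} = 2^n + 1 - 2^{j-1}$. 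These $1 + (j-2) + 1 = j$ triples use $a_j$, all $b_i$ and $c_i$ with $i < j$ as out-jobs, and $b_j, c_j$ and all $a_i$ with $i < j$ as in-jobs, each exactly once, so the final schedule has $\omega_j = 1$ and $\omega_i = 0$ for all $i < j$, and each step strictly decreased the makespan.

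The main obstacle is precisely the sign condition, since ``resetting'' a single level in isolation (sending $b_i, c_i$ out and bringing $a_i$ in) has net $-(p_{a_i} - p_{b_i} - p_{c_i}) = -2^{n+i-1} < 0$ and is never improving on its own; the levels must be recombined across indices so that each heavy incoming $a_i$ is matched against comparably heavy outgoing load. The pairing $a_i \leftrightarrow \{ b_i, c_{i+1} \}$ is what makes this work: the dominant terms $2^{n+i+1}$ cancel and leave only the tiny positive surplus $2^{i-1}$. The one remaining point to watch is the final swap, whose net $2^n + 1 - 2^{j-1}$ is positive precisely because $j \le n$ (indeed $2^{j-1} \le 2^{n-1} < 2^n + 1$); I would flag this explicit use of the hypothesis $j \in \{3, \dots, n\}$, as it is the place where the construction would break down for larger $j$.
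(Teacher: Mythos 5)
Your proof is correct and uses exactly the same decomposition as the paper's: the swap of $a_j$ with $\{a_{j-1}, b_j\}$ (net $2^{j-2}$), the swaps of $\{b_k, c_{k+1}\}$ with $a_k$ for $k \in \{1, \dots, j-2\}$ (net $2^{k-1}$), and the final swap of $\{b_{j-1}, c_1\}$ with $c_j$ (net $2^n + 1 - 2^{j-1}$). Your extra bookkeeping that each job moves exactly once, and your explicit flag that positivity of the last swap uses $j \leq n$, are sound refinements of the same argument rather than a different route.
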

\begin{proof}
We have $b_i, c_i, a_j \in M_{1}$ and $a_i, b_j, c_j \in M_{2}$ when $\omega_j = 0$ and $\omega_i = 1$ for all $i \in \{1, ... , j-1\}$ (see Fig.~\ref{fig:omegai}).
We have $p_{a_j} - p_{a_{j-1}} - p_{b_j} = 2^{j-2} > 0$. Therefore, we are able to swap $a_j$ with $a_{j-1}$ and $b_j$. Also, we have $p_{b_k} + p_{c_{k+1}} - p_{a_k} = 2^{k-1} > 0$ for all $k \in \{1, ... , j-2\}$. Therefore, we are able to improve $\sigma$ by swapping $b_k$ and $c_{k+1}$ with $a_k$ for all $k \in \{1, ... , j-2\}$. Moreover, we have $p_{b_{j-1}} + p_{c_1} - p_{c_j} = 2^n + 1 - 2^{j-1} > 0$. So, we improve the schedule by swapping $b_{j-1}$ and $c_1$ with $c_j$ (see Fig.~\ref{fig:omegai}). After all these swaps, we get  $\omega_j = 1$ and  $\omega_i = 0$ for all $i \in \{1, ... , j-1\}$.
\end{proof}

According to Lemma~\ref{lem:omega1}, Lemma~\ref{lem:omega2}, and Lemma~\ref{lem:omegai}, we have all the combinations of zeros and ones for the tuple $\omega$. Therefore, we have the following theorem.

\threeswaplower*

\section{Computational Results}
In the previous sections, we studied the running time of searching the $k$-swap neighborhood as well as the number of iterations to obtain a local optimum solution w.r.t. this neighborhood. In this section,
we conduct computational experiments on various families of instances.
First, we delve into the empirical behavior of the running time of the randomized neighborhood search algorithm of Section~\ref{sec:SearchNeighborhood} by comparing this algorithm with the naive approach.
Subsequently, we examine the total number of iterations needed to find a $k$-swap optimal solution.
Finally, we analyze the convergence behavior of the $k$-swap local search for different values of $k$, by measuring and comparing the relative makespan improvement throughout the search process.
In our empirical analysis, we consider 9 classes of instances $C_1, C_2, \dotsc, C_9$, each of which consists of 50 instances. 
The instances of each class are generated with respect to two parameters $n$ and $m$, which are the number of jobs and machines in that class, respectively (see Table~\ref{table:1} for an overview). Moreover, for the processing time of each job, we have chosen an integer uniformly at random from the range $[1, 10^9]$.

\begin{table}[H]
\caption{An overview on the different classes of instances.}
\label{table:1}
\vspace{3mm}
\centering
\begin{tabular}{c c c c c c c c c c}
 \hline
  & $C_1$ & $C_2$ & $C_3$ & $C_4$ & $C_5$ & $C_6$ & $C_7$ & $C_8$ & $C_9$ \\ 
 \hline\hline
 $n$ & 50 & 100 & 200 & 50 & 100 & 200 & 50 & 100 & 200 \\ 
 $m$ & 2 & 2 & 2 & 5 & 5 & 5 & 10 & 10 & 10 \\
 max $k$ & 9 & 6 & 5 & 9 & 6 & 5 & 9 & 6 & 5 \\
 \hline
\end{tabular}
\end{table}

We have implemented both the randomized and naive $k$-swap operators in Python 3.9.6, and we have executed them on a MacBook Pro with an Apple M1 chip and 8 GB of memory running macOS 15.4.
Our implementation and our instances are publicly available on GitHub~\cite{gitHubImp}.

\begin{table*}[h!]
\caption{Average values of the running time (in milliseconds) of the randomized and naive operators for different values of $k$ in different classes of instances.}
\label{table:time}
\vspace{3mm}
\centering
\small
\begin{tabular}{c c c c c c c c c c c}
 \hline
  & $k$ & $C_1$ & $C_2$ & $C_3$ & $C_4$ & $C_5$ & $C_6$ & $C_7$  & $C_8$  & $C_9$ \\ 
 \hline\hline
 \textit{Randomized} & 1 & 0.2 & 0.3 & 1.0 & 0.3 & 0.5 & 1.0 & 0.4 & 0.6 & 1.2 \\ 
\textit{Naive} & 1 & $< 0.1$ & $< 0.1$ & $< 0.1$ & $< 0.1$ & $< 0.1$ & $< 0.1$ & $< 0.1$ & $< 0.1$ & 0.1 \\ 
\textit{Randomized} & 2 & 0.2 & 0.4 & 0.9 & 0.2 & 0.3 & 0.5 & 0.2 & 0.3 & 0.4 \\ 
\textit{Naive} & 2 & 0.1 & 0.2 & 1.0 & 0.1 & 0.2 & 0.5 & 0.1 & 0.1 & 0.3 \\ 
\textit{Randomized} & 3 & 0.9 & 2.9 & 13.8 & 0.4 & 1.0 & 3.0 & 0.3 & 0.6 & 1.5 \\ 
\textit{Naive} & 3 & 0.5 & 2.2 & 14.8 & 0.1 & 0.6 & 3.7 & 0.1 & 0.2 & 1.1 \\ 
\textit{Randomized} & 4 & 1.8 & 6.4 & 22.6 & 0.5 & 1.6 & 6.0 & 0.3 & 0.8 & 2.6 \\ 
\textit{Naive} & 4 & 2.6 & 24.0 & 291.1 & 0.4 & 3.6 & 41.6 & 0.1 & 0.7 & 7.4 \\ 
\textit{Randomized} & 5 & 7.4 & 46.0 & 310.9 & 1.0 & 4.7 & 30.7 & 0.4 & 1.5 & 8.3 \\ 
\textit{Naive} & 5 & 12.8 & 257.1 & 9784.2 & 1.0 & 17.0 & 422.8 & 0.2 & 1.8 & 40.2 \\ 
\textit{Randomized} & 6 & 14.4 & 99.5 & - & 1.4 & 9.3 & - & 0.5 & 2.2 & - \\ 
\textit{Naive} & 6 & 61.8 & 2741.0 & - & 2.0 & 70.4 & - & 0.3 & 3.7 & - \\ 
\textit{Randomized} & 7 & 43.2 & - & - & 2.1 & - & - & 0.5 & - & - \\ 
\textit{Naive} & 7 & 245.5 & - & - & 3.2 & - & - & 0.3 & - & - \\ 
\textit{Randomized} & 8 & 71.8 & - & - & 2.9 & - & - & 0.6 & - & - \\ 
\textit{Naive} & 8 & 851.7 & - & - & 5.3 & - & - & 0.4 & - & - \\ 
\textit{Randomized} & 9 & 191.0 & - & - & 4.0 & - & - & 0.7 & - & - \\ 
\textit{Naive} & 9 & 2424.4 & - & - & 6.3 & - & - & 0.4 & - & - \\ 

 \hline
\end{tabular}
\end{table*}

\begin{table*}[h!]
\caption{Average number of iterations of the randomized and naive operators for different values of $k$ in different classes of instances in obtaining a $k$-swap optimal solution.}
\label{table:NumOfIterations}
\vspace{3mm}
\centering
\small
\begin{tabular}{c c c c c c c c c c c}
 \hline
  & $k$ & $C_1$ & $C_2$ & $C_3$ & $C_4$ & $C_5$ & $C_6$ & $C_7$  & $C_8$  & $C_9$ \\ 
 \hline\hline
\textit{Randomized} & 1 & 1.0 & 1.0 & 1.0 & 1.0 & 1.0 & 1.0 & 1.0 & 1.0 & 1.0 \\ 
\textit{Naive} & 1 & 1.0 & 1.0 & 1.0 & 1.0 & 1.0 & 1.0 & 1.0 & 1.0 & 1.0 \\ 
\textit{Randomized} & 2 & 3.4 & 4.6 & 5.1 & 8.3 & 11.0 & 17.7 & 13.5 & 24.0 & 35.0 \\ 
\textit{Naive} & 2 & 1.3 & 1.6 & 1.4 & 2.8 & 2.4 & 2.9 & 6.7 & 6.6 & 5.4 \\ 
\textit{Randomized} & 3 & 6.5 & 7.5 & 8.8 & 20.8 & 33.3 & 59.4 & 32.3 & 72.2 & 152.6 \\ 
\textit{Naive} & 3 & 2.9 & 3.7 & 4.6 & 7.0 & 10.2 & 16.1 & 13.6 & 18.1 & 29.0 \\ 
\textit{Randomized} & 4 & 7.8 & 9.6 & 12.1 & 39.9 & 83.4 & 152.7 & 57.0 & 181.3 & 510.7 \\ 
\textit{Naive} & 4 & 4.0 & 5.5 & 6.9 & 11.7 & 18.4 & 33.3 & 21.3 & 35.1 & 58.6 \\ 
\textit{Randomized} & 5 & 10.3 & 13.6 & 14.8 & 64.3 & 152.8 & 285.3 & 68.2 & 338.8 & 1213.6 \\ 
\textit{Naive} & 5 & 5.1 & 7.1 & 9.3 & 15.2 & 27.0 & 59.0 & 26.2 & 53.5 & 108.9 \\ 
\textit{Randomized} & 6 & 12.7 & 14.8 & - & 86.3 & 242.5 & - & 81.8 & 556.3 & - \\ 
\textit{Naive} & 6 & 6.1 & 8.5 & - & 20.0 & 37.1 & - & 33.2 & 75.3 & - \\ 
\textit{Randomized} & 7 & 13.7 & - & - & 111.4 & - & - & 79.4 & - & - \\ 
\textit{Naive} & 7 & 7.1 & - & - & 24.9 & - & - & 37.8 & - & - \\ 
\textit{Randomized} & 8 & 15.8 & - & - & 130.7 & - & - & 84.2 & - & - \\ 
\textit{Naive} & 8 & 7.8 & - & - & 30.8 & - & - & 39.7 & - & - \\ 
\textit{Randomized} & 9 & 15.6 & - & - & 150.9 & - & - & 80.7 & - & - \\ 
\textit{Naive} & 9 & 8.4 & - & - & 35.6 & - & - & 41.8 & - & - \\ 
 \hline
\end{tabular}
\end{table*}

\begin{figure}[h]
\centering
\includegraphics[width=0.54\textwidth]{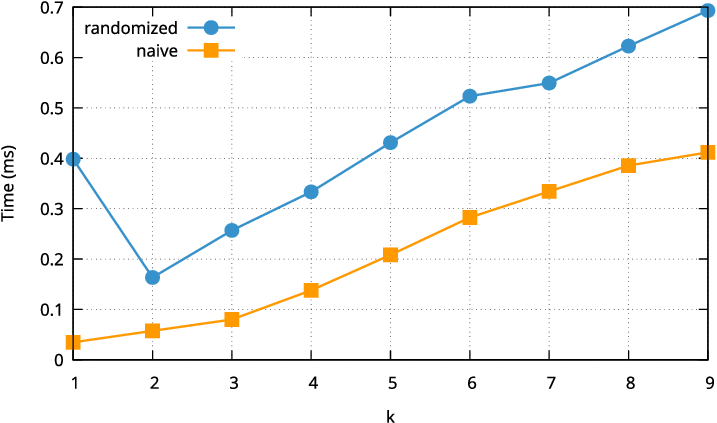}

\caption{Average running time (in milliseconds) of the randomized and naive operators for different values of $k$ in class $C_7$.}
\label{fig:diageamJ50-10}
\end{figure}

To do the experiments, we perform the randomized and naive operators for different values of $k$ on all the 450 instances that we have (see Table~\ref{table:1} for the maximum value of $k$ in each class).
To compute the average running time of the randomized and naive operators for an instance $i$ of a class $C_j$ and a value $k$, we start from an initial solution computed by the longest processing time first ($LPT$) algorithm~\cite{graham1969bounds}, and then, we divide the total running time by the number of iterations in finding a $k$-swap optimal solution with respect to these two operators.
Note that according to Lemma \ref{lem:probFailureOfKSwap}, when we are not able to find an improving solution by the randomized $k$-swap algorithm, we need to run it at most $\dfrac{2^k}{\binom{k}{\ceil{\frac{k}{2}}}}$ times to amplify the probability of success of the algorithm.
Finally, to calculate the average running time of the randomized and naive operators for a class $C_j$ and a value $k$, we compute the average value of the average running times of the randomized and naive operators over all the instances in class $C_j$ for the value $k$, respectively. 
For our detailed results, see Table~\ref{table:time} and Table~\ref{table:NumOfIterations}.

We observe that in most of the classes by increasing the value of $k$, the randomized operator is capable of finding an improving neighbor faster than the naive one (see Table~\ref{table:time}). However, in some cases such as in class $C_{7}$, the running time of the naive operator is less than the randomized operator even for larger values of $k$ (see Fig.~\ref{fig:diageamJ50-10}). More precisely, in classes where we have relatively more jobs on each machine, the randomized operator is able to find an improving neighbor faster than the naive one. Also, when we increase the value of $k$, the gap between the running times of these two operators increases (see Fig.~\ref{fig:diageamJ100}). 
As a matter of fact, for finding an improving neighbor in the $k$-swap neighborhood, the naive operator only searches the neighborhood, however, the randomized operator performs some preprocessing first and after that searches a smaller neighborhood. Therefore, when we have relatively fewer jobs on each machine, it is better to use the naive operator to find an improving neighbor, and when we have more jobs, by using the randomized operator, we are able to find a solution faster.

\begin{figure}[h!]
\centering
\includegraphics[width=0.48\textwidth]{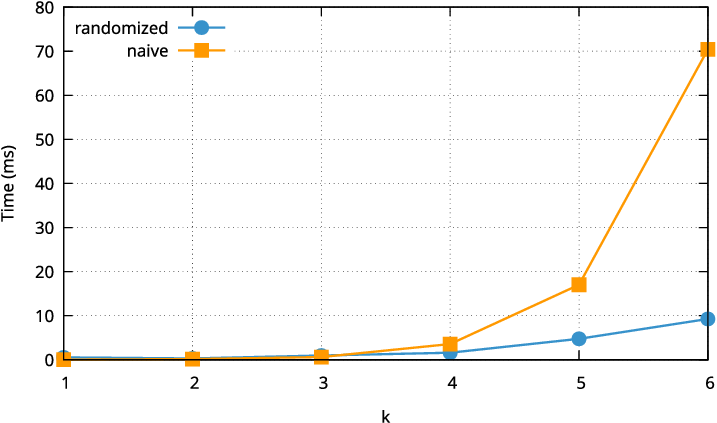}
\includegraphics[width=0.48\textwidth]{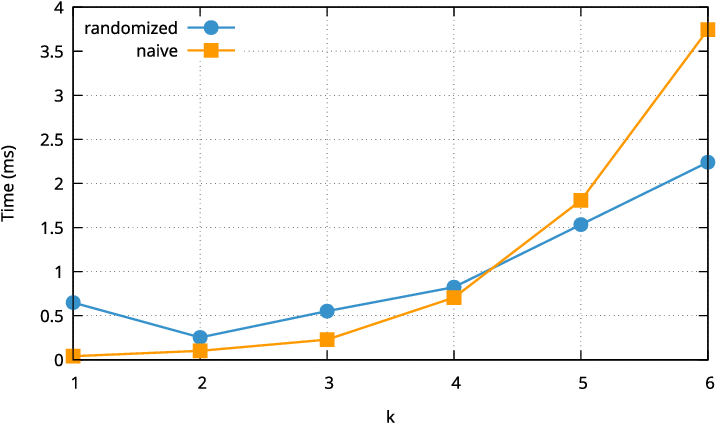}

\caption{Average running time (in milliseconds) of the randomized and naive operators for different values of $k$ in classes $C_5$ (left diagram) and $C_8$ (right diagram).}
\label{fig:diageamJ100}
\end{figure}

In Theorem~\ref{theorem:2-swapOptSolUpperBound}, we provide a polynomial upper bound for the number of iterations in finding a 2-swap optimal solution when we have two machines in our schedule. 
According to our computational results, we do not observe any significant increase in the number of iterations when the number of machines is 5 or 10 (see Table~\ref{table:NumOfIterations}).
This computational result raises the question of whether it is possible to establish a polynomial upper bound on the number of iterations when dealing with schedules involving more than two machines.
Moreover, in Theorem~\ref{theorem:lowerbound}, we establish an exponential lower bound for the case of $k \geq 3$. 
The computational experiments in Table~\ref{table:NumOfIterations} lead us to question whether the establishment of this lower bound reflects an exceptionally pessimistic scenario. In other words, the occurrence of such a scenario might be rare in practice.

\begin{figure}[h!]
\centering
\includegraphics[width=0.48\textwidth]{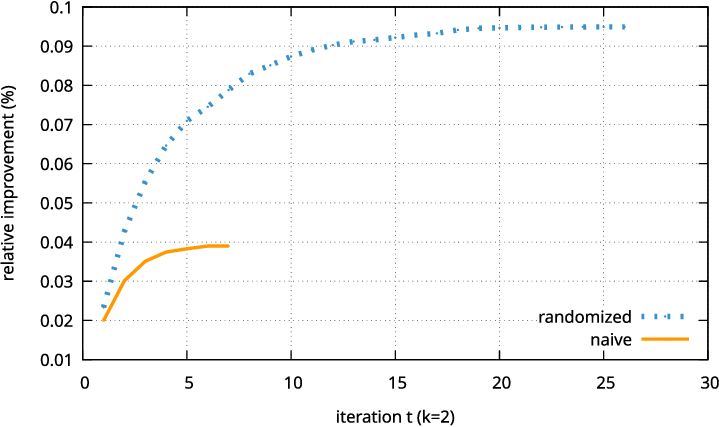}
\includegraphics[width=0.48\textwidth]{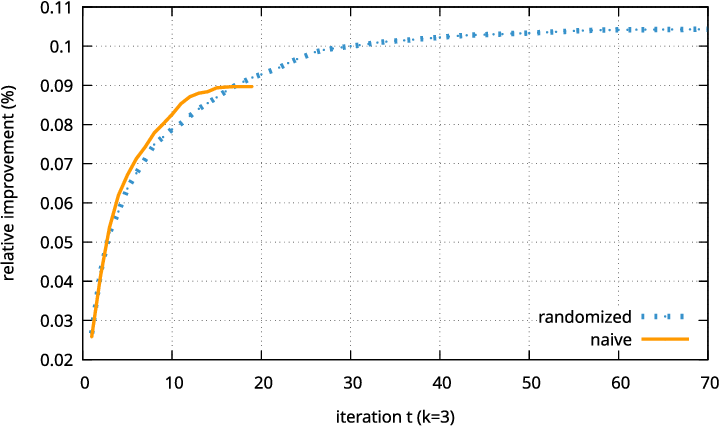}
\caption{Average values of the relative improvement in makespan (in percentage) of the randomized and naive operators per iteration~$t$ for $k=2$ (left diagram) and $k=3$ (right diagram) in class $C_5$.}
\label{fig:convergenceC51}
\end{figure}

\begin{figure}[h!]
\centering
\includegraphics[width=0.48\textwidth]{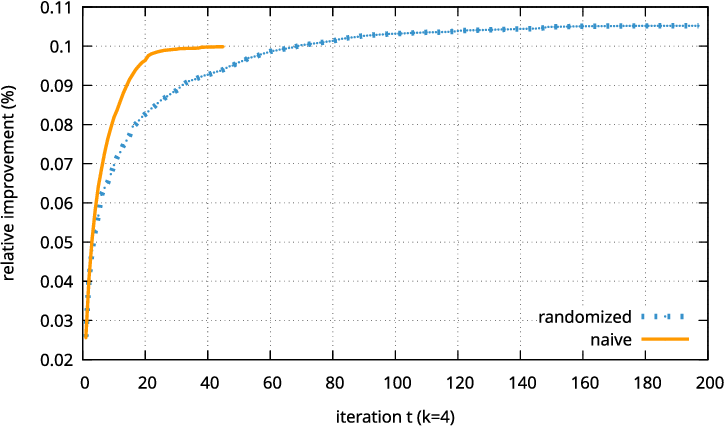}
\includegraphics[width=0.48\textwidth]{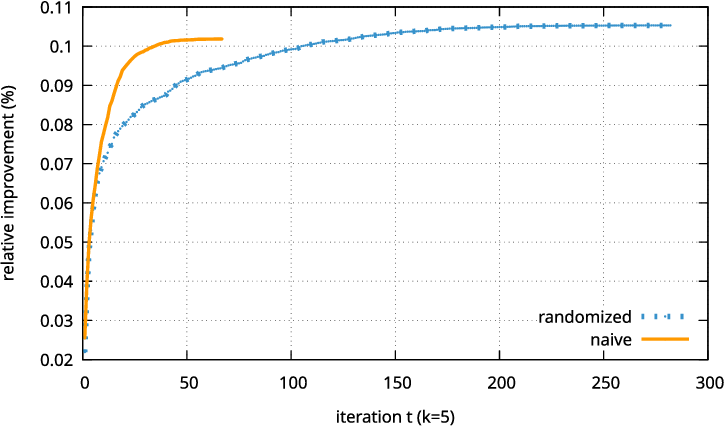}
\caption{Average values of the relative improvement in makespan (in percentage) of the randomized and naive operators per iteration~$t$ for $k=4$ (left diagram) and $k=5$ (right diagram) in class $C_5$.}
\label{fig:convergenceC52}
\end{figure}

Finally, we analyze the convergence behavior of the $k$-swap neighborhood over different values of $k$. To do so, we measure the relative improvement in makespan at each iteration $t$, computed as $1 - \frac{L_{\max}(t)}{L_{\max}(0)}$, where $L_{\max}(t)$ is the makespan at iteration $t$, and $L_{\max}(0)$ is the makespan of the initial solution. A higher value indicates a greater improvement.
The complete data on relative makespan improvement across all instances and values of $k$ is available on our GitHub page~\cite{gitHubImp}.
As an illustrative example, Figures~\ref{fig:convergenceC51} and~\ref{fig:convergenceC52} display the convergence curves for class $C_5$, showing that larger
$k$-values typically lead to greater makespan improvements.

\begin{table*}[h!]
\caption{Average value of the relative improvement in makespan in the $k$-swap optimal solution (in percentage) of the randomized and naive operators for different values of $k$ in different classes of instances.}
\label{table:finalMSI}
\vspace{3mm}
\centering
\fontsize{8.5}{10.5}\selectfont{
\begin{tabular}{c c c c c c c c c c c}
 \hline
  & $k$ & $C_1$ & $C_2$ & $C_3$ & $C_4$ & $C_5$ & $C_6$ & $C_7$  & $C_8$  & $C_9$ \\ 
 \hline\hline
\textit{Randomized} & 1 & 0.0 & 0.0 & 0.0 & 0.0 & 0.0 & 0.0 & 0.0 & 0.0 & 0.0 \\ 
\textit{Naive} & 1 & 0.0 & 0.0 & 0.0 & 0.0 & 0.0 & 0.0 & 0.0 & 0.0 & 0.0 \\ 
\textit{Randomized} & 2 & 0.04915 & 0.01564 & 0.00406 & 0.44485 & 0.09683 & 0.02879 & 1.74698 & 0.45392 & 0.11045 \\ 
\textit{Naive} & 2 & 0.0156 & 0.00669 & 0.00207 & 0.24151 & 0.03976 & 0.01542 & 1.29887 & 0.25511 & 0.0528 \\ 
\textit{Randomized} & 3 & 0.05334 & 0.01629 & 0.00413 & 0.50956 & 0.1064 & 0.03018 & 2.12856 & 0.51444 & 0.11969 \\ 
\textit{Naive} & 3 & 0.04539 & 0.01537 & 0.00404 & 0.41096 & 0.09148 & 0.02833 & 1.7008 & 0.41539 & 0.107 \\ 
\textit{Randomized} & 4 & 0.05365 & 0.01632 & 0.00413 & 0.51875 & 0.10731 & 0.03024 & 2.21709 & 0.52333 & 0.12044 \\ 
\textit{Naive} & 4 & 0.05092 & 0.01618 & 0.00413 & 0.46609 & 0.10186 & 0.02958 & 1.96538 & 0.4751 & 0.11367 \\ 
\textit{Randomized} & 5 & 0.0537 & 0.01632 & 0.00413 & 0.52172 & 0.1074 & 0.03024 & 2.24093 & 0.52511 & 0.12051 \\ 
\textit{Naive} & 5 & 0.05269 & 0.01629 & 0.00413 & 0.4851 & 0.10387 & 0.02999 & 1.99711 & 0.49794 & 0.1182 \\ 
\textit{Randomized} & 6 & 0.0537 & 0.01632 & - & 0.52214 & 0.10742 & - & 2.23807 & 0.52546 & - \\ 
\textit{Naive} & 6 & 0.05325 & 0.01631 & - & 0.49787 & 0.10582 & - & 2.08486 & 0.50736 & - \\ 
\textit{Randomized} & 7 & 0.0537 & - & - & 0.52246 & - & - & 2.22278 & - & - \\ 
\textit{Naive} & 7 & 0.05349 & - & - & 0.50892 & - & - & 2.11134 & - & - \\ 
\textit{Randomized} & 8 & 0.0537 & - & - & 0.52246 & - & - & 2.23074 & - & - \\ 
\textit{Naive} & 8 & 0.05361 & - & - & 0.51179 & - & - & 2.12059 & - & - \\ 
\textit{Randomized} & 9 & 0.0537 & - & - & 0.52249 & - & - & 2.23129 & - & - \\ 
\textit{Naive} & 9 & 0.05365 & - & - & 0.51522 & - & - & 2.13363 & - & - \\ 
 \hline
\end{tabular}}
\end{table*}

This trend is also reflected in the average improvement of the makespan of the $k$-swap optimal solution, summarized for each $k$ in Table~\ref{table:finalMSI}, which confirms that larger neighborhoods lead to better makespan values.
This behavior is aligned with theoretical insights: since the set of local optima for $k$-swap is a subset of that for $(k-1)$-swap, larger values of $k$ help escape inferior local optima. While the number of iterations might increase with $k$ (see Table~\ref{table:NumOfIterations}), this reflects continued improvement rather than inefficiency. Overall, the analysis suggests that larger values of $k$ yield better-quality solutions in terms of makespan improvement, without introducing significant additional computational cost.

\section{Concluding Remarks}

The main focus of this paper is to analyze the $k$-swap neighborhood for the problem of makespan minimization on identical parallel machines, both theoretically and empirically, providing insights into how well the basic implementation of this neighborhood performs.
To find an improving solution in this neighborhood, we have proposed a randomized algorithm that finds such an improving neighbor, if it exists, in $O(n^{\ceil{\frac{k}{2}} + O(1)})$ steps. This approach is faster than the naive approach, which requires $O(k \cdot n^k)$ steps. 
Moreover, we have established an almost matching lower bound of $\Omega(n^{\ceil{\frac{k}{2}}})$ on the time needed to find an improving neighbor, assuming the $k$-sum conjecture.

Additionally, we have analyzed the total number of iterations required to converge to a $k$-swap optimal solution.
For $k=2$, we have provided a polynomial upper bound of $O(n^4)$ when we have only two machines in our schedule. 
Our computational results indicate that there is no significant increase in the number of iterations when the number of machines is more than two. This observation raises the question of whether a polynomial upper bound on the number of iterations can also be established for schedules with more than two machines, which remains as an open problem.

On the other hand, for $k \geq 3$, we have shown that there exists a sequence of $2^{\Omega(n)}$ improving $3$-swaps before a $k$-swap optimal solution is reached.
The computational experiments lead us to question whether the established lower bound represents an exceptionally pessimistic scenario which might be rare in practice. 
A similar situation arises in the case of the 2-opt neighborhood for TSP.
More specifically, Englert et al.~\cite{englert2014worst} construct Euclidean TSP instances where 2-opt requires an exponential number of steps to reach a locally optimal tour. However, despite this pessimistic result, 2-opt performs reasonably well in practice~\cite{aarts2003local}.
To address such discrepancies between worst-case and practical performance, the notion of \emph{smoothed analysis}, a hybrid of worst-case and average-case, was introduced by Spielman and Teng~\cite{spielman2004smoothed}, aiming to provide a more realistic measure of an algorithm's performance. 
In a follow-up work~\cite{rohwedder2024smoothed}, we have shown that the smoothed number of $k$-swaps needed to find a $k$-swap optimal solution is bounded from above by a polynomial function in $n$ and $m$ as well as some smoothing parameter.

The focus of the empirical results in this paper is to compare the running time of the randomized and naive implementations for finding an improving $k$-swap neighbor. Therefore, we have used the most basic form of local search, known as iterative improvement. 
According to these results, in instances where we have relatively more jobs compared to the number of machines, the randomized implementation performs faster than the naive approach (especially for larger values of $k$).
Metaheuristics like simulated annealing~\cite{aarts1989simulated} and tabu search~\cite{glover1989tabu, glover1990tabu} will most likely lead to improved solutions, but since they allow for non-improving neighbors, they are not suitable for this comparison.
Naturally, $k$-swap neighborhood can also serve as a foundation for these metaheuristics, which are designed to escape poor local optima.

Note that any $k$-swap neighborhood includes all jump neighbors, and therefore, the performance guarantee of a jump optimal solution~\cite{finn1979linear} carries over the $k$-swap neighborhood, that is, the solution value is bounded by $2 - \frac{2}{m+1}$ times the optimal solution value.

Lastly, since not always all machines are identical in practice, future research could focus on extending this work to uniform or unrelated machine scheduling problems.

\bibliographystyle{plain} 
\bibliography{cas-refs}

\end{document}